\def\marginpar#1{\ignorespaces}
\newtheorem{theorem}{Theorem}[section]
\newtheorem{lemma}[theorem]{Lemma}
\newtheorem{proposition}[theorem]{Proposition}
\newtheorem{definition}[theorem]{Definition}
\newtheorem{remark}[theorem]{Remark}
\newtheorem{assumption}[theorem]{Assumption}
\numberwithin{equation}{section}
\begin{document}
\title[Order Flow Auction under Proposer-Builder Separation]{Analysis of the Order Flow Auction under Proposer-Builder Separation on Blockchain}

\author[Ruofei Ma, Wenpin Tang, David Yao]{{Ruofei} Ma, {Wenpin} Tang, {David} Yao}
\address{Department of Industrial Engineering and Operations Research, Columbia University. 
} \email{rm3881@columbia.edu,wt2319@columbia.edu,yao@columbia.edu}

\date{} 
\begin{abstract}
We study the impact of the order flow auction (OFA) in the context of the proposer-builder separation (PBS) mechanism in blockchains through a game-theoretic perspective. 
The OFA is designed to improve user welfare by redistributing maximal extractable value (MEV) to the users, in which two sequential auctions take place: the order flow auction and the block-building auction. 
We formulate the OFA as a multiplayer game, 
and establish the existence of a Nash equilibrium, and in the two-player case derive a closed-form solution (and prove its uniqueness) via a quartic equation.
Our result shows that the builder with a competitive advantage pays a lower cost, leading to a higher revenue, and adding to centralization in the builder space.
In contrast, the proposer's shares evolve as a martingale process, which implies decentralization in the proposer/validator space. 
Our analyses rely on various tools from stochastic processes, convex optimization, and polynomial equations. 
We also conduct numerical studies to corroborate our findings, and to bring out other features of the OFA under the PBS mechanism.
\end{abstract}

\maketitle



\section{Introduction}
\quad A blockchain is a decentralized, distributed, and tamper-proof digital ledger that tracks and verifies digital transactions securely without the need for a central authority. Its applications span a wide range of industries, including sustainable energies~\cite{wu2018energy}, cryptocurrency~\cite{hashemi2020cryptocurrency,LAROIYA2020213}, healthcare~\cite{ratta2021healthcare, cryptography3010003,bell2018applications}, and construction industry~\cite{KIM20202561}. 


\quad To maintain its decentralized structure, a blockchain relies on consensus mechanisms, with the two most widely used being Proof-of-Work (PoW) and Proof-of-Stake (PoS). PoW requires substantial computational power to solve cryptographic puzzles, making it highly energy-intensive. In contrast, PoS is more energy-efficient as it selects validators based on their stakes rather than computational power. 
In view of its energy efficiency and increasing adoption, we focus on the PoS mechanism in this study.

\quad The POS system, however, also faces its own challenges when large validators (or proposers -- here we use both terms interchangeably) gain enough power to negatively impact decentralization, the core principle of blockchain technology. 
This is because validators with larger holdings have a higher chance of validating new blocks and earn more rewards, increasing wealth concentration and excluding smaller participants~\cite{bains2022blockchain}. 
\smallskip
\setlength{\parindent}{0pt}\paragraph{\textbf{Maximal Extractable Value (MEV)}}
A significant portion of validators' revenue comes from MEV, which involves strategically reordering, inserting, or censoring transactions within a block to maximize profits. This includes activities like DEX arbitrage, sandwich attacks, and liquidations that exploit inefficiencies in the market. Some of these activities depend solely on the blockchain's state, using on-chain data to extract value. Others require information from external sources, such as off-chain data, to identify opportunities like CEX-DEX arbitrage~\cite{ethereum_mev, monoceros2023}.

\quad MEV is widely considered as one of the greatest threats to decentralization in blockchain networks, favoring the validators with more resources ~\cite{daian2020}. 
Extracting MEV effectively requires significant capital, advanced strategies, and considerable computational power, which most ordinary validators may not have access to. As a result, well-equipped validators can gain an advantage, further centralizing their control within the network~\cite{bahrani2024,capponi2024,ethereum_pbs,gupta2023,pai2023,yang2024}.
\smallskip
\setlength{\parindent}{0pt}\paragraph{\textbf{Proposer-Builder Separation (PBS)}}
To distribute MEV fairly among the validators and to prevent centralization, Ethereum has introduced the PBS mechanism ~\cite{ethereum_pbs}. Originally, validators were responsible for both proposing new blocks and constructing their contents.
PBS separates these responsibilities into two distinct roles: \textbf{block builders} and \textbf{block proposers}. 

\quad Block proposers (or validators) validate/propose blocks; whereas block builders are responsible for assembling blocks. Block builders compete to create the most profitable block, and participate in a block-building auction, offering fees (bids) to the proposer. The proposer then receives the block bundle created by the winning builder and their bid. 
This allows the proposers to collect significant auction revenue without requiring advanced technical expertise. By fostering competition among the builders, this system diminishes the advantage previously held by sophisticated validators over ordinary ones, facilitating a more balanced distribution of MEV among validators~\cite{ethereum_pbs, flashbots2024, yang2024, buterin2021}.

\quad While PBS can mitigate centralization among the validators to some extent, it tends to create centralization within the builder community. For instance, some builders are exceptionally skilled at exploiting arbitrage opportunities, enabling them to capture significant MEV and consistently win block-building opportunities. This concentration of expertise and resources among a few builders can lead to a situation where a small number of builders control a large portion of the block-building market, resulting in a concentration of power within the builder space~\citep{bahrani2024, gupta2023}.
\smallskip
\setlength{\parindent}{0pt}\paragraph{\textbf{Order Flow Auction (OFA)}}
OFA is another mechanism aiming to mitigate centralization and redistribute MEV. 
While PBS involves the builders and the validators, OFA focuses on the interaction between 
the \textbf{users} and the \textbf{builders}. 
It seeks to return a portion of MEV back to the users. In this process, users send their orders to a third-party auction, where block builders or MEV-extracting searchers bid for the exclusive rights to execute strategies on these orders. This approach is conceptually analogous to Payment For Order Flow (PFOF) in traditional finance.~\citet{gosselin2023}, and~\citet{monoceros2023} outline the structure of OFA as follows:
\begin{enumerate}
    \item Order Flow Originators (OFO): Order Flow Originators (OFOs) refer to wallets, decentralized applications (dApps), or custodians that users interact with for on-chain transactions. These OFOs gather the orders created by users and forward them to the OFA.
    \item Auctioneer: OFA discloses certain information to a group of bidders. 
    \item Bidders: Bidders then need to submit their bids back to the OFA. The OFA must determine the criteria for selecting the winning bids. 
    \item Winning Bid: Finally, the OFA bundles are sent to the winning bidder for inclusion in the blockchain. 
\end{enumerate}



\begin{figure}[h]
  \centering
   \begin{tikzpicture}[mynode/.style={draw, minimum width=1.5cm, minimum height=1cm, align=center}]
      \node[mynode][draw,rectangle] (Users) at (0,0) {Users};
      \node[mynode][draw,rectangle] (OFO) at (3,0) {OFO};
      \node[mynode][draw,rectangle]  (OFA) at (6,0) {OFA};
      \node[mynode][draw,rectangle]  (Builders) at (9,0) {Builders};
      \node[mynode][draw,rectangle]  (Validators) at (12,0) {Validators};

      \draw[->] (Users) -- (OFO) node[midway, above] {orders};
      \draw[->] (OFO) -- (OFA) node[midway, above] {orders};
      \draw[->] (Builders) -- (Validators) node[midway, above] {bids};
      \draw[->, bend left=15] (OFA) to (Builders) node[midway, above, xshift = 7.5cm, yshift = 0.3cm] {orders};
      \draw[->, bend left=15] (Builders) to (OFA) node[midway, below, xshift = 7.5cm, yshift = -0.3cm] {bids};
    \end{tikzpicture}
  \caption{OFA.}
  \label{fig:OFA} 
\end{figure}

\quad Figure ~\ref{fig:OFA} presents a flowchart illustrating the OFA process. It is important to note that an order included by a block builder is not necessarily guaranteed on-chain inclusion. The order will only be included if the block builder wins the block-building auction. Therefore, the OFA must provide reliable inclusion guarantees and faces the challenge of the double auction problem, which stems from the interaction between the OFA and the block-building auction~\cite{gosselin2023}. 
If higher rebates are offered to users, less MEV can be redistributed to validators, which may affect the timely inclusion of blocks. Thus, it is crucial to find the optimal balance between user rebates and inclusion fees offered to validators in the winner selection process~\cite{gosselin2023}.
\smallskip
\setlength{\parindent}{0pt}
\paragraph{\textbf{Interplay Between PBS and OFA}}
While PBS and OFA target different parts of the blockchain system, with PBS redistributing MEV between builders and validators and OFA seeking to return MEV to users, they are interconnected. In practice, these two mechanisms operate sequentially: builders first acquire users' order flow through OFA, then aggregate this order flow with other available transactions to construct a block, which they submit for inclusion via PBS.  This introduces a dependency: the value that a builder can extract from the order flow may influence their bid in PBS. Conversely, the likelihood of winning the block-building auction in PBS constrains the amount a builder is willing to pay for order flow in OFA. Hence, the two auctions form a coupled system in which builders optimize their total bid across both mechanisms. 

\smallskip
\setlength{\parindent}{0pt}\paragraph{\textbf{Main Contributions}} 
\begin{enumerate}
    \item We formulate the OFA as an $M$-player game, where each builder's decision variable is the amount of MEV they are willing to pay. Under suitable conditions, we establish the existence of a Nash equilibrium and provide its characterization.
    \item We study in depth the case when $M = 2$, i.e., there are two competing builders. Interestingly, solving for the corresponding Nash equilibrium boils down to solving a univariate quartic equation. We show that there exists a unique Nash equilibrium and derive its closed-form solution. Our analysis reveals that when there are two builders, their equilibrium payments do not scale linearly with the MEV they can extract. Instead, the more capable builder pays less, and hence   earns a higher expected revenue, which drives centralization. Our simulation experiments with three players show a similar pattern, with an interesting variation: the most capable player’s advantage is further amplified, while the gap between the second and the least capable players narrows compared to the two-player case. 
    \item We formulate the evolution of validators' stake shares as a Pólya urn process with a random replacement matrix. With this model, we show the stake shares follow a martingale process. This aligns with our simulation results, which demonstrate that the average stake shares remain nearly constant over time. Furthermore, we analyze the long-term behavior of this process. In the absence of consumption factors (i.e., when no costs are incurred due to staking), we characterize the distribution of the limiting stake shares via functional equations.
\end{enumerate}

\smallskip
\setlength{\parindent}{0pt}\paragraph{\textbf{Related Literature}}
Our paper relates closely to the literature on centralization in blockchains. Prior studies have explored various factors driving centralization, both with and without the PBS mechanism. 
Reward heterogeneity, a direct cause of centralization, arises from skill disparities among block producers in the absence of PBS~\cite{bahrani2024}. It is further influenced by order flow acquisition, along with the resulting MEV extraction and arbitrage opportunities under PBS~\cite{capponi2024,gupta2023}. In fact, the role of order flow in centralization has been widely recognized. Empirical studies show that private order flow exacerbates disparities in block-building capacity among builders~\cite{yang2024}, and those with greater access to private order flow have a higher probability of winning the block-buildling auction~\cite{wang2024privateorderflowsbuilder}.

\quad Our study takes into account both the builder heterogeneity in skills and knowledge, and the process of order flow acquisition. Relative to previous studies, we further model order flow acquisition as an auction that explicitly captures its relationship with builders' block-building capacity, while also incorporating the interaction between the order flow auction and the block-building auction, which appears under-explored in prior works. 

\quad Our research also contributes to the literature on the evolution of validators' stake shares in the Proof-of-Stake (PoS) system.~\citet{rosu2021market}, and~\citet{Tang22} show that in a standard PoS system, validators' stake shares evolve as a martingale process.~\citet{tang2024polynomialvotingrules} analyze stake share evolution under a polynomial voting rule. 
See also \citet{Tang24} for a review. 
Building on these works, our research further examines a setting in which the validator's reward for proposing a block is stochastic and dependent on builders' bids. In addition, we include consumption factors into the system, recognizing that staking in the pool may incur costs, including opportunity costs, locked funds, and other potential expenses.

\medskip
{\bf Organization of the paper}:
The remainder of the paper is organized as follows.
In Section \ref{sc2} we introduce the two models for OFA and for PBS. 
We analyze the Nash equilibrium among the builders in Section \ref{sc3}, and then
study the evolution of stake shares of the validators 
in Section \ref{sec: validators' shares}.  
Numerical examples and findings are presented in Section \ref{sec: numerical results}, and 
concluding remarks in Section \ref{concl}.

\section{The OFA and PBS Models}
\label{sc2}
\quad In this section, we develop a formal model for the OFA, PBS, and the PoS system, focusing on the equilibrium and stochastic processes associated with the model. 
Section~\ref{sec: modeling--builders} introduces the model for builders, who participate in both the order flow auction and the block-building auction. Section~\ref{sec: modeling--validators} presents the model for validators and the PoS system.

\quad First, here is a list of some of the common notations used throughout the paper.
\begin{itemize}
    \item $\mathbb{N}_+$ denotes the set of positive integers, $\mathbb{R}$ denotes the set of real numbers, and $\mathbb{R}_+$ denotes the set of positive real numbers.
    \item $\left[ n\right]$ denotes the set $\{1,2,\ldots,n \}$.
    \item $a = \mathcal{O}(b)$ means $\frac{a}{b}$ is bounded from above as $b \to \infty$, and  $a \sim b$ means $\frac{a}{b}$ converges to $1$ as $b \to \infty$.
    \item \( \mathbf{I} \left(A \right) \) denotes the indicator function of event \( A \), which equals $1$ if event $A$ happens and 0 otherwise. 
\end{itemize}

\quad Builders in a decentralized system are responsible for assembling transaction blocks. They participate in two sequential auctions: the order flow auction and the block-building auction. In the order flow auction, builders compete to acquire users' order flow, which can provide additional MEV opportunities. By strategically integrating the auctioned order with their existing order flow, they can optimize execution for increased profitability. In the subsequent block-building auction, builders bid for the right to propose their assembled block for inclusion on-chain, thereby capturing MEV generated from transaction sequencing and execution.

\quad Validators are responsible for validating and proposing blocks. They participate in the consensus process by staking cryptocurrencies as collateral. At each block-building opportunity, a validator is selected to propose the next block. The selected validator receives the bid from the winning builder in the block-building auction. 

\begin{remark}
{\em In our model, we do not distinguish between MEV searchers and builders, but instead treat them as a collective entity. 
This is motivated by the fact that searchers often direct their order flow to dominant builders or restrict it exclusively to vertically integrated builder-searcher entities \citep{gupta2023, titan2023builder, ethresear2024buildersbehavioralprofiles, yang2024}. This close coordination, coupled with their shared objective of maximizing MEV, has convinced us to ignore any distinction between their roles, and to focus instead on their combined strategic behavior in the context of MEV extraction.}
\end{remark}

\quad Let $M \in \mathbb{N}_+$ be the total number of builders, and $N \in \mathbb{N}_+$ be the total number of validators, both of which remain fixed throughout the paper; and let $\left[ M \right]$ and $\left[ N \right]$ denote the sets of all builders and all validators, respectively. Builders first participate in a single round of the order flow auction, followed by a single round of the block-building auction.


\subsection{Builder's Game} \label{sec: modeling--builders}
Let \( f_{i} \) represent the amount of Maximal Extractable Value (MEV) that builder \( i \) can capture. Assume that \( f_{i} \) follows a distribution \( D_{i} \) for each \( i \in \left[ M \right] \). The expected value of \( f_{i} \) is given by \( \mathbf{E} [f_{i}] = \bar{f}_{i} \). Suppose builder $i$ has value $v_{i}$ for winning the transaction right in the order flow auction. Assume that \( v_{i} \) follows some distribution \( F_{i} \), for \( i \in \left[ M \right] \). The expected value of \( v_{i} \) is given by \( \mathbf{E} [v_{i}] = \bar{v}_{i} \).

\quad  Let \( r_{{i}} \) denote the total amount that builder \( i \) bids to both users and the selected validator. $r_i$ is modeled as a random variable parametrized by $h_i$, such that $\mathbb{E} (r_i) = h_i$, with support $\left[r_i^{\min}, r_i^{\max} \right]$, where $r_i^{\min}> 0$. By definition, it follows that $h_i \in \left[r_i^{\min}, r_i^{\max} \right]$. Suppose a fraction \( \mu  r_{{i}} \) is ultimately bid to users, and \( (1-\mu) r_{i} \) is bid to the selected validator, with \( \mu \) determined by the design of the order flow auction $\left( 0 \leq \mu \leq 1\right)$. Note that the two auctions are sequential, and that builders must report, in the first auction—the order flow auction—the total amount they intend to bid across both the OFA and the PBS auctions. A fraction of this reported amount is bid to users through the OFA, while the remainder is bid to the selected validator through PBS \cite{gosselin2023}. We make the following assumption.




\begin{assumption}
The winners of the order flow auction and the block-building auction are determined independently, and both of them are first-price auctions. The probability that $r_i$ is the highest among $r_1, \dots, r_M$ is given by

\begin{equation}
    \mathbf{P} \left( r_{\left(M\right)} = r_i \right) = \frac{h_i}{\sum_{j=1}^M h_j}, \label{assump: PL model}
\end{equation}

where $r_{\left(M\right)} = \max \left\{ r_1, r_2,\dots,r_M\right\}$.

\end{assumption}

\begin{remark}
    {\em Equation \ref{assump: PL model} is based on the Plackett-Luce model from the theory of social choice. In this framework, each player $i$ is assigned a positive parameter $h_i$ representing their latent ``ability'' or ``competitiveness,'' which may reflect heterogeneity among builders, such as differences in resources or technical capabilities. The model implies that the probability of player $i$ submitting the highest bid $r_i$ among all players is proportional to $h_i$, normalized by the sum $\sum_{j=1}^M h_j$.}
\end{remark}



\quad Let $V_i$ denote the event that builder $i$ wins the order flow auction, and $Z_i$ denote the event that builder $i$ wins the block-building auction:
\begin{align}
\mathbf{I} \left( V_i\right) &= 
\begin{cases} 
1 & \text{if builder \( i \) wins the order flow auction,} \\ 
0 & \text{otherwise.} 
\end{cases} \notag\\
\mathbf{I} \left( Z_i\right) &= 
\begin{cases} 
1 & \text{if builder \( i \) wins the block-building auction,} \\ 
0 & \text{otherwise.} 
\end{cases} \notag
\end{align}

Table~\ref{tab: revenue outcomes for builders} presents the four possible outcomes of the two auctions for builder $i$, along with the corresponding revenue (utility) in each case. (Revenue can be regarded as utility, and we use both terms interchangeably in this paper.) Let $h_{-i}$ denote the $M - 1$ strategies of all the builders except $i$. The expected utility of builder \( i \) is given by:
\begin{align}
 \pi_i(h_i | h_{-i}) &= \mathbf{E} \left[ \mathbf{I} \left( Z_i \text{} \right) (f_i + v_i \mathbf{I} \left(V_i\right)) - \mu r_i \mathbf{I}  \left(V_i\right) - (1 - \mu) r_i \mathbf{I} \left(Z_i \right) \right], \notag\\
 &= \bar{f}_i \frac{h_i}{\sum_{j=1}^M h_j} + \bar{v}_i \left( \frac{h_i}{\sum_{j=1}^M h_j}\right)^2 - \frac{h_i^2}{\sum_{j=1}^M h_j}.
 \label{eq: utility func}
\end{align}

Each builder $i$ chooses a bid with expectation $h_i$ from the strategy space $B_i \coloneqq \left[r_i^{\min},r_i^{\max}\right]$, where $r_i^{\min} > 0$, to maximize their expected utility. Let $h_i^*$ denote the optimal bid, i.e.,


\begin{equation*}
h_{i}^* = \arg \max_{h_i \in B_i} \bar{f}_i \frac{h_i}{\sum_{j=1}^M h_j} + \bar{v}_i \left( \frac{h_i}{\sum_{j=1}^M h_j}\right)^2 - \frac{h_i^2}{\sum_{j=1}^M h_j}.
\end{equation*}

\quad In contrast to ~\citet{capponi2024}, which introduce an additional player, the order flow provider, and model the order flow acquisition process using a quadratic function, our framework takes a different perspective. Specifically, we model the payment as a proportion of the MEV obtained and introduce randomness in the winner selection process for both the OFA and PBS. As a result, our objective function involves the term $\frac{h_i}{\sum_{j=1}^M h_j}$, reflecting the competitive interaction among the builders.

\begin{table}[h]
    \centering
    \begin{tabular}{|c|c|c|}
        \hline
        Order Flow Auction & Block-Building Auction & Revenue \\ 
        \hline
        Win   & Win   & $f_i + v_i - r_i $  \\ 
        Win   & Lose   & $-\mu r_i$   \\ 
        Lose   & Win   & $f_i - \left( 1-\mu \right) r_i$   \\ 
        Lose   & Lose   & $0$ \\
        \hline 
    \end{tabular}

    \medskip
    \caption{Revenue Outcomes for Builder $i$.}
    \label{tab: revenue outcomes for builders}
\end{table}

\quad We aim to find the Nash equilibrium of the game among the builders.

\begin{definition}
Let $M$ denote the total number of builders. Let $B_i$ be the set of all possible strategies for builder $i$, where $i \in \left[ M \right]$. Let $h = (h_i, h_{-i})$ be a strategy profile where $h_{-i}$ denotes the $M - 1$ strategies of all the builders except $i$. A Nash equilibrium is a strategy profile $h^* = (h_i^*, h_{-i}^*)$ if
$$\pi_i(h_i^* \vert h_{-i}^*) \geq \pi_i(h_i \vert h_{-i}^*)$$ 
for all $h_i \in B_i$.
\end{definition}

\subsection{Validator's Game} \label{sec: modeling--validators}
Let $s_{j,t}$ denote the stake held by validator $j$ at time $t$, and define the total stake at time $t$ as $S_t \coloneqq \sum_{j = 1}^N s_{j,t}$. The fraction of the total stake held by validator $j$ at time $t$ is given by $\omega_{j,t} \coloneqq \frac{s_{j,t}}{S_t}$ for $j \in \left[ N \right]$. In each round $t$, the probability that validator $j$ is selected to propose a block is $\omega_{j,t-1}$. The initial stake share of validator $j$ is given by $\omega_{j,0} = \frac{s_{j,0}}{S_0}$, where $s_{j,0}$ represents the initial stake held by validator $j$, and $S_0 = \sum_{j=1}^N s_{j,0}$ denotes the total initial stake held by all $N$ validators. We assume that each validator holds a positive initial stake.
\begin{assumption}
    For all $j \in \left[ N\right]$, the initial stake is strictly positive, i.e., $s_{j,0} > 0$, and hence the total initial stake satisfies $S_0 = \sum_{j=1}^N s_{j,0} > 0$.
\end{assumption}


\quad If selected, a validator can either propose the block submitted by the winning builder in the block-building auction or choose to propose the block built by themselves. Let $\beta_{w}$ denote the bid submitted by the winning builder in the block-building auction. $\beta_{w}$ is a random variable drawn from the set of random variables $\left\{\mu r_1, \mu r_2, \dots, \mu r_M \right\}$, where $\beta_{w} =\mu r_i$ with probability $\frac{h_i}{\sum_{k=1}^M h_k}$.  Let $\beta_{v}$ denote the value of the block built directly by the selected validator. We make the following assumptions:


\begin{assumption}
    MEV extraction abilities are identical across all validators and are characterized by the same value $\beta_{v}$.
\end{assumption}


\begin{remark}
{\em Similar to the builders, we consider a validator's ability to extract MEV as an intrinsic characteristic. Moreover, 
the validators with significantly higher skills in MEV extraction 
would likely operate as the builders instead. Consequently, those in the role of the validators are expected to exhibit relatively homogeneous MEV extraction capabilities in the PBS system.}
\end{remark}

\quad The reward received by the selected validator for proposing a block is given by $R_t = \max \{\beta_{w}, \beta_{v}\} $.  $\left\{R_t\right\}_{t\geq1}$ is a sequence of i.i.d random variables with mean $\mathbb{E}\left( R_t\right) = R$ for all $t \geq 1$.

\quad Let $R_{\min} \coloneqq \min\left\{ k_1, k_2, \ldots, k_M \right\}$ denote the minimum possible value of the i.i.d variable $R_t$, where $k_i = \max\left\{ \beta_v, \mu r_i^{\min} \right\}$ for each $i \in \left[ M\right]$. Similarly, let $R_{\max} \coloneqq \max\left\{ g_1, g_2, \ldots, g_M \right\}$ denote the maximum possible value of $R_t$, where $g_i = \max\left\{ \beta_v, \mu r_i^{\max} \right\}$ for $i \in \left[ M\right]$.

\quad Additionally, a staking cost $ \alpha \frac{s_{j,t}}{S_t^{1+\gamma}}$ is incurred, where $\gamma \geq 0$. This cost may reflect factors such as opportunity cost of locked funds or operational expenses. It is proportional to a validator's stake share, conceptually similar to transaction fees in traditional finance, which are typically a small percentage of the transaction amount. To ensure that $\alpha$ is sufficiently small, we make the following assumption.


\begin{assumption}
     $\alpha < \min \left\{ S_0^\gamma R_{\min}, \, S_0, \, R_{\min} \right\}$. \label{assump: bound on alpha}
\end{assumption}

\quad This assumption guarantees that $S_t$ is strictly increasing for all $t \geq 0$ (see Lemma~\ref{lemma: S_t increasing}), and connects the cost of staking with the builder’s bid.


\quad Let $X_{j,t}$ denote the event that validator $j$ is chosen at time $t$, and define its corresponding indicator variable $\mathbf{1}\left(X_{j,t}\right)$ as
\begin{equation}
    \mathbf{1}\left(X_{j,t}\right) = \begin{cases}
        1, & \text{ if validator $j$ is selected at time $t$}, \\
        0, & \text{ otherwise.}
    \end{cases} \notag
\end{equation}
\begin{assumption} \label{assump: independence of validator selection}
The event $X_{j,t}$, i.e., the selection of validator $j$ at time $t$, is independent of all other random variables in the system.
\end{assumption}

\quad The stake held by each validator evolves according to the following update rule:
\begin{equation}
\begin{aligned}
    s_{j,t} &= s_{j, t-1} + R_t \mathbf{1}\left(X_{j,t}\right) - \alpha \frac{s_{j,t-1}}{S_{t-1}^{1+\gamma}}, \\
    &=\begin{cases} 
    s_{j,t-1} - \alpha \frac{s_{j,t-1}}{S_{t-1}^{1+\gamma}} & \text{with probability } 1 - \omega_{j,t-1} \\ 
        s_{j,t-1} + R_{t} - \alpha \frac{s_{j,t-1}}{S_{t-1}^{1+\gamma}}& \text{with probability } \omega_{j,t-1}
    \end{cases}
    \quad\text{ for } j \in \left[N\right].
\end{aligned}
\end{equation}
As a result, the total stake \( S_t \) evolves as  
\begin{equation}
    S_t = S_{t-1} + R_{t} -\frac{\alpha}{S_{t-1}^{\gamma}}
    \label{eq:S_t},
\end{equation}
where we take into account the identity \( \sum_{k=1}^{N} s_{k,t} = S_t \).
Finally, for each $t \in \mathbb{N_{+}}$, let $\mathcal{F}_t$ denote the filtration generated by random events $\left( X_{j,r}: j \in \left[N\right], r\leq t\right)$.

\section{Analysis of the game of the builders}
\label{sc3}

\quad In this section, we analyze the strategic interactions among the builders.
Section~\ref{sec: equilibrium among builders} considers the general setting with $M$ players,
and the two-player game is studied in Section~\ref{sec: two builders}.
Numerical results for the multi-player setting are given in Section \ref{sec: simulation for three builders}.

\subsection{Nash Equilibrium among \texorpdfstring{$M$}{M} Builders} \label{sec: equilibrium among builders}
We make the following assumptions about the game between builders.
\begin{assumption} \label{fi_vi_assumption}
    The parameters satisfy \texorpdfstring{$\bar{f}_i > 0$, $\bar{v}_i > 0$, and $\bar{f}_i \geq \bar{v}_i$}{parameters are positive and satisfy the required inequalities} for all \texorpdfstring{$i \in \left[ M\right]$}. 
\end{assumption}

\begin{remark}
{\em This assumption implies that, in expectation, the MEV each builder can independently extract is at least as large as the additional MEV that may be obtained from the order being auctioned in the OFA. In other words, the orders auctioned in the OFA should not constitute the primary source of MEV in expectation.}
\end{remark}

\begin{lemma}
    The payoff function $\pi_i \left( h_i | h_{-i}\right)$ is strictly concave with respect to builder $i$'s own strategy $h_i$. \label{concavity}
\end{lemma}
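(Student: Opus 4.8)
The plan is to show that $\pi_i$, viewed as a function of $h_i \in B_i$ with $h_{-i}$ fixed, has a strictly negative second derivative. Introduce the shorthand $s \coloneqq h_i + c$, where $c \coloneqq \sum_{j \neq i} h_j > 0$ collects the (fixed) strategies of the opponents; here $c > 0$ because each $h_j \geq r_j^{\min} > 0$. In this notation the payoff from \eqref{eq: utility func} reads
\begin{equation*}
\pi_i(h_i \mid h_{-i}) = \bar{f}_i \, \frac{h_i}{s} + \bar{v}_i \, \frac{h_i^2}{s^2} - \frac{h_i^2}{s}.
\end{equation*}
First I would differentiate each of the three terms in $h_i$, remembering that $s$ itself depends on $h_i$ with $\frac{\partial s}{\partial h_i} = 1$. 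This is a routine but slightly delicate computation of a rational function, best organized by treating the three terms separately and then combining over the common denominator $s^3$ (or $s^4$ for the middle term).

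Carrying out the differentiation, I expect the first term $\bar{f}_i h_i / s$ to contribute $-2\bar{f}_i c / s^3 < 0$ to the second derivative, which is manifestly negative since $\bar{f}_i, c, s > 0$. The third term $-h_i^2/s$ I expect to contribute something of the form $-2(\,\cdot\,)/s^3$ that is also negative for $h_i, c > 0$; indeed writing $h_i^2/s = h_i - c + c^2/s$ (after polynomial division) makes its convexity/concavity transparent, so its negative is concave. The genuinely delicate term is the middle one, $\bar{v}_i h_i^2 / s^2$, because $x \mapsto x^2/(x+c)^2$ is \emph{not} concave throughout $(0,\infty)$ — its second derivative changes sign. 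Thus the main obstacle is that strict concavity of $\pi_i$ does not follow termwise; it must come from the interplay between the middle term and the others, and this is exactly where Assumption~\ref{fi_vi_assumption} (specifically $\bar{f}_i \geq \bar{v}_i$ together with positivity) should enter.

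To handle this, I would combine all three second-derivative contributions over the common denominator $s^4$ and examine the sign of the resulting numerator, which will be a polynomial (at most cubic) in $h_i$ with coefficients depending on $c$, $\bar{f}_i$, and $\bar{v}_i$. The goal is to show this numerator is strictly negative for every $h_i \in [r_i^{\min}, r_i^{\max}] \subset (0, \infty)$. I anticipate that after grouping terms the positive contributions all originate from $\bar{v}_i$ while the dominating negative contributions carry factors of $\bar{f}_i$ and of $c$; invoking $\bar{f}_i \geq \bar{v}_i$ lets me bound the $\bar{v}_i$-terms by the corresponding $\bar{f}_i$-terms and thereby pin down the sign. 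An alternative, possibly cleaner, route is to rewrite $\pi_i$ using the normalized share $p \coloneqq h_i/s \in (0,1)$, express $h_i = cp/(1-p)$, and study concavity in that chart — but since the change of variables $h_i \mapsto p$ is itself nonlinear, concavity is not automatically preserved, so I would only use this as a heuristic check rather than a proof. The clean, self-contained argument is the direct second-derivative sign analysis, and I expect the whole lemma to reduce to verifying the negativity of one explicit low-degree polynomial numerator under Assumption~\ref{fi_vi_assumption}.
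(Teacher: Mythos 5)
Your approach is essentially the paper's: compute $\partial^2\pi_i/\partial h_i^2$ directly, put it over the common denominator $H^4$, and check the sign of the numerator, which the paper factors as $-2H_{-i}\bigl(\bar{f}_iH + H_{-i}(H_{-i}-\bar{v}_i) + h_i(H_{-i}+2\bar{v}_i)\bigr)$ with $H=\sum_j h_j$, $H_{-i}=\sum_{j\neq i}h_j$; the assumption $\bar{f}_i\ge\bar{v}_i$ enters exactly where you predict, since the lone negative term $-\bar{v}_iH_{-i}$ in the bracket is absorbed by $\bar{f}_iH>\bar{f}_iH_{-i}\ge\bar{v}_iH_{-i}$. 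Your termwise diagnosis (only the $\bar{v}_i h_i^2/s^2$ term fails to be concave on its own, its second derivative being $2\bar{v}_i c(c-2h_i)/s^4$) and the dominance argument you anticipate are both correct, so the plan closes as stated.
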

\begin{proof}
Let $H = \sum_{j} h_j$ and $H_{-i} =  \sum_{j \neq i} h_j $. It suffices to note that
    \begin{align*}
    \frac{\partial^2 \pi_i}{\partial h_i^2} = -\frac{2 H_{-i} \left( \bar{f_i} H + H_{-i} \left(H_{-i} - \bar{v_i}\right) + h_i \left(H_{-i} + 2 \bar{v_i}) \right) \right)}{H^4} < 0,
\end{align*}
holds for all $h_i \in B_i$ when $\bar{f}_i \geq \bar{v}_i$ (Assumption \ref{fi_vi_assumption}).
\end{proof}

\quad Recall from Section~\ref{sec: modeling--builders} that the strategy space for player \( i \) is given by \( B_i = [r_i^{\min}, r_i^{\max}] \), where $r_i^{\min} > 0$. Under this specification, the existence of a pure-strategy Nash equilibrium follows immediately.

\begin{theorem}
There exists a pure-strategy Nash equilibrium in the game \( \langle \mathcal{I}, (B_i)_{i \in \mathcal{I}}, (\pi_i)_{i \in \mathcal{I}} \rangle \), where \( \mathcal{I} = \left[ M \right] \). \label{thm: existence of NE_N player}
\end{theorem}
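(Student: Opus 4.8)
The plan is to recognize this statement as an instance of the classical existence theorem for concave games (Debreu--Glicksberg--Fan), so that the entire claim reduces to verifying a short list of regularity hypotheses. Concretely, I would check that (i) each strategy set $B_i$ is a nonempty, compact, convex subset of $\mathbb{R}$; (ii) each payoff $\pi_i$ is jointly continuous on the product space $\prod_{j} B_j$; and (iii) $\pi_i$ is (quasi-)concave in player $i$'s own variable $h_i$. These three facts together guarantee a pure-strategy Nash equilibrium.

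Two of the ingredients are immediate. For (i), $B_i = [r_i^{\min}, r_i^{\max}]$ is a closed bounded interval with $r_i^{\min} > 0$, hence nonempty, compact, and convex. For (iii) I would simply invoke Lemma~\ref{concavity}, which already establishes strict concavity of $\pi_i$ in $h_i$ under Assumption~\ref{fi_vi_assumption}.

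The only point requiring genuine care is (ii), joint continuity. Writing $H = \sum_{j} h_j$, the payoff $\pi_i$ in \eqref{eq: utility func} is a rational function whose only possible singularity is at $H = 0$. This is precisely where the hypothesis $r_i^{\min} > 0$ is essential: on $\prod_{j} B_j$ one has $H \geq \sum_{j} r_j^{\min} > 0$, so the denominator is bounded away from zero and $\pi_i$ is in fact smooth, a fortiori continuous, on the whole product of strategy spaces. I expect this bounded-denominator observation, rather than any fixed-point machinery, to be the main (and essentially the only) obstacle; it is what rules out degeneracies at the boundary and makes the standard theorem applicable.

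Finally I would assemble the fixed-point argument. By strict concavity (Lemma~\ref{concavity}) together with compactness, the best-response set $\mathrm{BR}_i(h_{-i}) = \arg\max_{h_i \in B_i} \pi_i(h_i \mid h_{-i})$ is a singleton for every $h_{-i}$, and by Berge's maximum theorem it depends continuously on $h_{-i}$. Hence the joint best-response map $\mathrm{BR} = (\mathrm{BR}_i)_{i \in \mathcal{I}}$ is a continuous self-map of the nonempty, compact, convex set $\prod_{j} B_j \subset \mathbb{R}^M$, so Brouwer's fixed-point theorem produces a fixed point $h^* = \mathrm{BR}(h^*)$, which is by construction a Nash equilibrium of $\langle \mathcal{I}, (B_i)_{i \in \mathcal{I}}, (\pi_i)_{i \in \mathcal{I}} \rangle$. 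Equivalently, one could apply Kakutani's theorem directly to the best-response correspondence without first arguing single-valuedness.
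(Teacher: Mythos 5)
Your proposal is correct and follows essentially the same route as the paper, which simply cites Rosen's existence theorem for concave games after noting that the strategy sets are compact and convex, $\pi_i$ is continuous in all variables, and concave in $h_i$ (via Lemma~\ref{concavity}). Your additional details --- the bounded-denominator observation $H \geq \sum_j r_j^{\min} > 0$ guaranteeing continuity, and the explicit Berge/Brouwer (or Kakutani) fixed-point assembly --- are just the content of the cited theorem spelled out, not a different argument.
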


\begin{proof}
    By Rosen’s existence theorem~\citep{8344bf6b-b18f-3bf0-88f4-c6a0b490d844}, the existence of a pure-strategy Nash equilibrium follows from the fact that the strategy space is convex and compact, and the utility function \( \pi_i \) is continuous in all \( h_j \) for \( j \in [M] \) and concave in \( h_i \).
\end{proof}

\quad Although Section~\ref{sec: modeling--builders} defines the strategy space for player \( i \) as \( B_i = [r_i^{\min}, r_i^{\max}] \), we may, without loss of generality, assume that \( r_i^{\max} \geq \bar{f}_i + \bar{v}_i \), since values outside the desired range can be assigned zero density. Similarly, we may set $r_i^{\min} = \epsilon$ for all $i \in \left[M \right]$, where $\epsilon > 0$ is an arbitrarily small constant. Moreover, we slightly abuse notation by using \( h_i^* \) to denote both the optimal bid over the original strategy space \( B_i \) and over the domain \( (0, \infty) \). The optimal bid \( h_i^* \), which maximizes \( \pi\left(h_i \mid h_{-i}\right) \) as defined in Eq.~\eqref{eq: utility func} over the domain \( (0, \infty) \), is in fact bounded above by \( \bar{f}_i + \bar{v}_i \), as established in the following lemma.

\begin{lemma}
Let
\begin{equation*}
    h_{i}^* = \arg \max_{h_i > 0} \bar{f}_i \frac{h_i}{\sum_{j=1}^M h_j} + \bar{v}_i \left( \frac{h_i}{\sum_{j=1}^M h_j}\right)^2 - \frac{h_i^2}{\sum_{j=1}^M h_j}.
\end{equation*}
Then it follows that $h_i^* < \bar{f}_i + \bar{v}_i$.
\label{lem: hi<=M}
\end{lemma}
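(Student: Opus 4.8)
The plan is to leverage the strict concavity already proved in Lemma~\ref{concavity}. Since $\pi_i(\cdot \mid h_{-i})$ is strictly concave on $(0,\infty)$, its derivative $\partial \pi_i/\partial h_i$ is strictly decreasing in $h_i$; consequently, to establish the upper bound it suffices to show that this derivative is already strictly negative at the single candidate point $h_i = \bar{f}_i + \bar{v}_i$. Indeed, if $\partial \pi_i/\partial h_i < 0$ there, then by monotonicity the derivative stays negative for every $h_i \geq \bar{f}_i + \bar{v}_i$, so $\pi_i$ is strictly decreasing on $[\bar{f}_i+\bar{v}_i,\infty)$ and the maximizer $h_i^*$ must lie strictly below $\bar{f}_i+\bar{v}_i$.

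First I would write $H = \sum_j h_j$ and $H_{-i} = \sum_{j\neq i} h_j>0$ (the same notation as in the proof of Lemma~\ref{concavity}) and compute the derivative explicitly,
\[
\frac{\partial \pi_i}{\partial h_i} = \bar{f}_i\frac{H_{-i}}{H^2} + \bar{v}_i\frac{2 h_i H_{-i}}{H^3} - \frac{h_i(H + H_{-i})}{H^2}.
\]
I would then evaluate this at $h_i = \bar{f}_i + \bar{v}_i$ and clear the strictly positive factor $H^3$, so that only the sign of the resulting numerator is at issue.

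Second, I would carry out the algebraic simplification of that numerator. Introducing the shorthand $a = \bar{f}_i$, $b = \bar{v}_i$, $c = a+b$ (the value of $h_i$ being tested), and $d = H_{-i}$, so that $H = c+d$, the numerator becomes
\[
a d (c+d) + 2 b c d - c (c+2d)(c+d).
\]
The decisive manipulation is to substitute $c = a+b$ and regroup by powers of $d$: using $a + 2b - 3c = -(2a+b)$ for the linear coefficient and $a - 2c = -(a+2b)$ for the quadratic coefficient, the expression collapses to
\[
-\,c^3 \;-\; c\,d\,(2a+b)\;-\;d^2\,(a+2b),
\]
which is a sum of three strictly negative terms since $a,b,c,d>0$. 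Hence $\partial \pi_i/\partial h_i\big|_{h_i=\bar{f}_i+\bar{v}_i}<0$, completing the argument.

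The main obstacle is precisely this sign determination: the raw numerator is not obviously signed, and a careless grouping leaves terms of mixed sign (e.g. a stray $+ad^2$). The entire argument hinges on substituting $c=a+b$ and recognizing that the coefficients of $d^0$, $d^1$, and $d^2$ are then each negative. As a final remark I would note that the degenerate case $H_{-i}=0$ (a single builder) needs no separate treatment, since there $\pi_i = \bar{f}_i + \bar{v}_i - h_i$ is strictly decreasing and the bound is immediate; the interesting content is entirely in the regime $H_{-i}>0$ handled above.
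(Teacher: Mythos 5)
Your proof is correct, but it takes a genuinely different route from the paper's. You argue at the level of the first derivative: you compute $\partial\pi_i/\partial h_i$, evaluate it at $h_i=\bar f_i+\bar v_i$, and show the numerator collapses to $-c^3-cd(2a+b)-d^2(a+2b)<0$ (this algebra checks out), then invoke the strict concavity of Lemma~\ref{concavity} to propagate the negative sign to all $h_i\ge\bar f_i+\bar v_i$, so the payoff is strictly decreasing on that half-line and no maximizer can lie there. The paper instead argues at the level of function values: it groups the payoff as $\frac{1}{H^2}\left[H_{-i}\left(\bar f_i h_i-h_i^2\right)-h_i^2\left(h_i-\bar f_i-\bar v_i\right)\right]$, observes both terms are nonpositive once $h_i\ge\bar f_i+\bar v_i$, and contrasts this with the strictly positive value $\pi_i(\bar f_i\mid h_{-i})=\bar f_i^2\bar v_i/H^2$ at the interior point $h_i=\bar f_i$. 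The paper's value comparison is slightly more elementary, needing essentially no calculus and only a token appeal to concavity, whereas your argument relies on Lemma~\ref{concavity} (and hence on Assumption~\ref{fi_vi_assumption}) in an essential way to convert a pointwise derivative sign into monotonicity on a half-line; in exchange, your computation yields the sharper conclusion that the payoff is strictly decreasing beyond the threshold, and your aside on the degenerate case $H_{-i}=0$ is harmless since the model presumes competition among at least two builders.
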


\begin{proof}
    Let $H = \sum_{j=1}^M h_j$ and $H_{-i} = \sum_{j \neq i} h_j$. Consider $\pi\left(h_i \vert h_{-i}\right)$. When $h_i \geq \bar{f}_i + \bar{v}_i$, we have
    \begin{equation*}
        \pi\left(h_i \vert h_{-i}\right) = \frac{1}{H^2} \left[ H_{-i} \left( \bar{f}_i h_i - h_i^2 \right) - h_i^2 \left(  h_i- \bar{f}_i - \bar{v}_i \right) \right] \leq 0.
    \end{equation*}
When $h_i = \bar{f}_i$, we obtain
\begin{equation*}
        \pi\left(h_i = \bar{f}_i \vert h_{-i}\right) = \frac{1}{H^2}   \bar{f}_i^2 \bar{v}_i  > 0.
\end{equation*}
By Lemma \ref{concavity}, since $\pi\left(h_i \vert h_{-i}\right)$ is strictly concave and continuous in $h_i$, it follows that $h_i^* < \bar{f}_i + \bar{v}_i$, as any $h_i^* \geq \bar{f}_i + \bar{v}_i$ cannot be the best response for player $i$.
\end{proof}

\begin{remark}
{\em Based on the above discussion and the preceding lemma, we may, without loss of generality, redefine the strategy space of \( h_i \) as \( B_i = \left[\epsilon, \bar{f}_i + \bar{v}_i\right] \), since the optimal bid \( h_i^* \) over the entire domain \( (0, \infty) \) is bounded above by \( \bar{f}_i + \bar{v}_i \).
}
\end{remark}

\quad The Nash equilibrium of the game can be characterized as follows.

\begin{lemma}
    The first-order conditions for the Nash equilibrium of the game \( \langle \mathcal{I}, (B_i)_{i \in \mathcal{I}}, (\pi_i)_{i \in \mathcal{I}} \rangle \), where \( \mathcal{I} = \left[ M \right] \), are given by:
    \begin{equation*} \label{FOC for NE}
        \frac{\partial \pi_i}{\partial h_i} = 0 \quad \text{for } i \in \left[ M \right],
    \end{equation*}
where
\begin{equation}
    \frac{\partial \pi_i}{\partial h_i} = \frac{- \bar{f}_i h_i H + \bar{f}_i H^2 + 2 \bar{v}_i h_i H - 2 h_i^2 \bar{v}_i - 2 h_i H^2 + h_i^2 H}{H^3} \quad \text{for } i \in \left[ M \right], \label{first derivative of pi}
\end{equation}
with $H = \sum_{j} h_j$. 

\quad If the system of equations defined by the first-order conditions admits at least one positive solution, let $\left(h_1^\prime, h_2^\prime, \dots, h_m^\prime \right)$ denote such a solution. Define $h_i^* = \max \left\{ h_i^\prime, \epsilon \right\}$ for each $i \in \left[M \right]$. Then $\left(h_1^*, h_2^*, \dots, h_m^* \right)$ is the pure-strategy Nash equilibrium of the game. If no positive solution exists, then the pure-strategy Nash equilibrium must lie on the boundary of the domain $\left[\epsilon, C_1 \right] \times \left[\epsilon, C_2 \right] \times \ldots \times \left[\epsilon, C_M \right]$ where $C_i = \bar{f}_i+\bar{v}_i$.
\end{lemma}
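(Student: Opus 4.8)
The plan is to reduce the multiplayer equilibrium condition to a single-player best-response problem governed entirely by the strict concavity established in Lemma~\ref{concavity}, and then to read off the two cases (interior versus boundary) according to whether the resulting stationarity system is solvable in the positive orthant.

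First I would verify the derivative formula \eqref{first derivative of pi} by directly differentiating \eqref{eq: utility func} in $h_i$, using $\partial H/\partial h_i = 1$ and $H_{-i} = H - h_i$; this is routine and only records the stationarity condition. The substantive step is the best-response characterization. Fix any $h_{-i}$ with positive components and set $g_i(h_i) := \pi_i(h_i \mid h_{-i})$. By Lemma~\ref{concavity}, $g_i$ is strictly concave on $(0,\infty)$; moreover $g_i(h_i) \to 0$ as $h_i \to 0^+$ and $g_i(h_i) \to -\infty$ as $h_i \to \infty$ (the $-h_i^2/H$ term dominates). Hence $g_i$ attains a unique unconstrained maximizer $\hat h_i(h_{-i}) \in (0,\infty)$, which is exactly the unique positive root of $\partial \pi_i/\partial h_i = 0$, and by Lemma~\ref{lem: hi<=M} it satisfies $\hat h_i(h_{-i}) < \bar f_i + \bar v_i = C_i$. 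Because the upper endpoint $C_i$ is therefore never active, the unique best response on $B_i = [\epsilon, C_i]$ is the projection onto the lower endpoint only, namely $\mathrm{BR}_i(h_{-i}) = \max\{\hat h_i(h_{-i}), \epsilon\}$.

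With this in hand the two assertions follow by assembling best responses. Suppose the stationarity system admits a positive solution $(h_1', \dots, h_M')$. Then for each $i$, $h_i'$ is precisely $\hat h_i(h_{-i}')$, the unconstrained best response to $h_{-i}'$. Since $\epsilon$ is an arbitrarily small positive constant, we may take $\epsilon \le \min_i h_i'$, so that $h_i^* = \max\{h_i', \epsilon\} = h_i'$ is feasible and coincides with the constrained best response $\mathrm{BR}_i(h_{-i}^*)$ for every $i$; thus $(h_1^*, \dots, h_M^*)$ is a mutual best response, i.e., a Nash equilibrium, the clipping $\max\{\cdot,\epsilon\}$ serving only to keep the profile inside $B_i$ should a component fall below the chosen lower bound. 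For the complementary case, recall that a pure-strategy equilibrium exists by Theorem~\ref{thm: existence of NE_N player}. If this equilibrium lay in the interior $\prod_i (\epsilon, C_i)$, then the first-order necessary conditions would hold at every coordinate, producing a positive solution of the stationarity system; contrapositively, if no positive solution exists the equilibrium cannot be interior and must lie on $\partial\big(\prod_i [\epsilon, C_i]\big)$. Finally, since $\hat h_i < C_i$ uniformly, no coordinate can be pinned at its upper endpoint, so every active constraint is the lower one $h_i = \epsilon$.

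The main obstacle I anticipate is not any single estimate but the consistency of the projection step: $h_i'$ is a best response to $h_{-i}'$, whereas a Nash equilibrium requires each $h_i^*$ to be a best response to the \emph{projected} profile $h_{-i}^*$. The clean resolution relies on two structural facts — strict concavity (so the best response is single-valued and the FOC is both necessary and sufficient for the unconstrained optimum) and Lemma~\ref{lem: hi<=M} (so the upper bound is slack and the only possible corner is $\epsilon$) — together with the freedom to choose $\epsilon$ below all positive stationary components, which makes the projection act as the identity on the relevant solution and removes the cross-player inconsistency.
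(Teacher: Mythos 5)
The paper states this lemma without supplying any proof, so there is nothing to compare against line by line; your argument is, in effect, the missing justification, and it is essentially correct. The route you take --- verify the derivative of \eqref{eq: utility func} directly, use the strict concavity of Lemma~\ref{concavity} to make the first-order condition necessary and sufficient for the unconstrained best response, invoke Lemma~\ref{lem: hi<=M} to show the upper endpoint $C_i=\bar f_i+\bar v_i$ is never binding, and then appeal to Theorem~\ref{thm: existence of NE_N player} in the contrapositive for the boundary case --- is exactly the argument the authors implicitly rely on. Two small points deserve flagging. First, to conclude that the unconstrained maximizer $\hat h_i(h_{-i})$ lies in $(0,\infty)$ you should note explicitly that $\partial\pi_i/\partial h_i\to \bar f_i/H_{-i}>0$ as $h_i\to 0^+$ (or that $\pi_i(\bar f_i\mid h_{-i})>0$, as in the proof of Lemma~\ref{lem: hi<=M}); the limits $g_i\to 0$ and $g_i\to-\infty$ alone do not rule out a maximizer at the left endpoint. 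Second, you correctly identify the one genuine delicacy: if some $h_i'<\epsilon$ with $\epsilon$ fixed in advance, the clipped profile need not be a mutual best response, since the other players' stationarity was verified against the unclipped $h_{-i}'$. Your resolution --- taking $\epsilon\le\min_i h_i'$, which the paper's ``arbitrarily small $\epsilon$'' convention permits --- is the only reading under which the statement is literally true, and it is worth saying that the lemma as written is silently assuming this. Finally, note that your proof establishes that the constructed profile is \emph{a} Nash equilibrium, not \emph{the} Nash equilibrium; uniqueness for general $M$ is not argued here (nor anywhere in the paper, which proves it only for $M=2$ via the quartic), so the definite article in the lemma overstates what is actually shown.
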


\quad In the remaining of this section, 
we focus on the case where $M = 2$.

\subsection{\texorpdfstring{$M = 2$}{M=2} Builders} \label{sec: two builders}

 Given that \(h_1, h_2 \neq 0\), let \(h_2 = \lambda h_1\), where \(\lambda > 0\). Substituting this into Eq.~(\ref{first derivative of pi}), the equations simplify to:
    \begin{align}
    \begin{cases}
    h_1^3 \left(1+2\lambda\right)\left(1+\lambda\right) = \lambda h_1^2 \left(\bar{f}_1 \lambda + \bar{f}_1 + 2 \bar{v}_1\right), \\
    h_1^3 \lambda \left(2 + \lambda\right)\left(1+\lambda\right) = h_1^2 \left(\bar{f}_2 + (\bar{f}_2 + 2 \bar{v}_2\right)\lambda).
    \end{cases} \label{equiv FOC}
    \end{align}
    Solving Eq.~(\ref{equiv FOC}) is equivalent to solving the quartic equation:
    \begin{equation}
    P(\lambda) = \bar{f}_1 \lambda^4 + \left(3 \bar{f}_1 + 2\bar{v}_1\right)\lambda^3 + \left(2\bar{f}_1-2\bar{f}_2+4\bar{v}_1-4\bar{v}_2\right) \lambda^2 - \left(3\bar{f}_2 + 2\bar{v}_2\right)\lambda - \bar{f}_2 = 0. \label{FOC: quartic equation}
    \end{equation}
    Once $\lambda$ is obtained, we set $h_2 = \lambda h_1$ and solve for $h_1$ using either of the reformulated equations:
    \begin{equation}
    h_1 = \dfrac{\lambda \left(\bar{f}_1 \lambda + \bar{f}_1 + 2 \bar{v}_1\right)}{\left(1+2\lambda\right)\left(1+\lambda\right)} = \dfrac{\bar{f}_2 + \left(\bar{f}_2 + 2 \bar{v}_2\right) \lambda}{\lambda \left(2 + \lambda\right) \left(1 + \lambda\right)}.  \label{solve for h1 from lamda}
    \end{equation}

\begin{theorem}
The game \( \langle \mathcal{I}, (B_i)_{i \in \mathcal{I}}, (\pi_i)_{i \in \mathcal{I}} \rangle \), where \( \mathcal{I} = \{1, 2\} \), has a unique  Nash equilibrium \( h^{\text{NE}} = \left(h_1^{\text{NE}}, h_2^{\text{NE}} \right) \). \label{uniquess of NE}
\end{theorem}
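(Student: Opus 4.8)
The plan is to take existence for granted (Theorem~\ref{thm: existence of NE_N player}) and prove only uniqueness, which I would reduce entirely to a statement about the positive roots of the quartic $P$ in Eq.~\eqref{FOC: quartic equation}. The substitution $h_2=\lambda h_1$ with $\lambda>0$ turns the two first-order conditions in \eqref{equiv FOC} into the single scalar equation $P(\lambda)=0$ together with the recovery formula \eqref{solve for h1 from lamda}; consequently, interior critical points of the game are in bijection with positive roots of $P$. The theorem will then follow once I establish (i) that $P$ has exactly one positive root, (ii) that this root determines a unique positive pair $(h_1^*,h_2^*)$, and (iii) that every Nash equilibrium is such an interior critical point.

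The heart of the argument is step (i), which I would settle with Descartes' rule of signs. Reading off the coefficients of $P$ from degree $4$ down to $0$ produces the sign pattern
$$\bigl(+,\; +,\; ?,\; -,\; -\bigr),$$
since the leading coefficient $\bar f_1$ and the $\lambda^3$ coefficient $3\bar f_1+2\bar v_1$ are positive, the $\lambda$ and constant coefficients $-(3\bar f_2+2\bar v_2)$ and $-\bar f_2$ are negative under Assumption~\ref{fi_vi_assumption}, and only the $\lambda^2$ coefficient $2\bar f_1-2\bar f_2+4\bar v_1-4\bar v_2$ has indeterminate sign. The key observation is that the count of sign changes is insensitive to this ambiguity: whether the middle coefficient is positive ($+,+,+,-,-$), zero ($+,+,-,-$ after deleting it), or negative ($+,+,-,-,-$), there is exactly one sign change. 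Descartes' rule then forces the number of positive roots to be at most one and to differ from one by an even integer, hence to equal exactly one. This uniform handling of the three cases is the step I expect to be the main (and essentially the only) obstacle, as everything else is routine bookkeeping.

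Given the unique positive root $\lambda^*$, step (ii) is immediate: the first expression in \eqref{solve for h1 from lamda} has strictly positive numerator and denominator for $\lambda^*>0$, so it yields a unique $h_1^*>0$, whence $h_2^*=\lambda^* h_1^*>0$; the two expressions in \eqref{solve for h1 from lamda} agree precisely because $P(\lambda^*)=0$. For step (iii) I would invoke strict concavity of each $\pi_i$ in $h_i$ (Lemma~\ref{concavity}), which holds on all of $(0,\infty)$ and renders each best response unique, together with the upper bound $h_i^*<\bar f_i+\bar v_i$ (Lemma~\ref{lem: hi<=M}), which keeps the upper boundary slack. To rule out a boundary equilibrium at the lower end, I would note from \eqref{first derivative of pi} that $\partial\pi_i/\partial h_i>0$ as $h_i\to 0^+$ (the numerator tends to $\bar f_i H_{-i}^2>0$), so $\pi_i$ is increasing near the origin and the left endpoint $\epsilon$ is never a maximizer once $\epsilon<\min\{h_1^*,h_2^*\}$. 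Hence every Nash equilibrium is interior and satisfies both first-order conditions, so it corresponds to a positive root of $P$; by step (i) that root is $\lambda^*$, and by step (ii) the equilibrium is exactly $(h_1^*,h_2^*)$, establishing uniqueness.
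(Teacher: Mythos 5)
Your proposal is correct, but it reaches the key conclusion by a genuinely different route. The paper proves that $P$ has exactly one positive root via Vieta's formulas (the product of the roots is $-\bar f_2/\bar f_1<0$, which leaves three possible root configurations) followed by a case analysis: the ``three positive roots'' case is excluded by combining the sum of the roots with the sum of their reciprocals, and the ``complex pair'' case is excluded by evaluating $P(0)$, $P(-\tfrac12)$, $P(-2)$ and invoking the intermediate value theorem to produce two negative real roots. Your single application of Descartes' rule of signs --- observing that the sign pattern is $(+,+,?,-,-)$ and that every resolution of the ambiguous middle coefficient yields exactly one sign change, hence exactly one positive root --- is shorter and arguably cleaner for the theorem as stated. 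What the paper's longer argument buys is a by-product you do not get: it shows that \emph{all four} roots of $P$ are real, which is exactly what the subsequent closed-form theorem needs in order to express the roots in the all-real trigonometric form via $\varphi=\arccos(\Delta_1/2\sqrt{\Delta_0^3})$. Your steps (ii) and (iii) match the paper's reduction of equilibria to positive roots of $P$ via the substitution $h_2=\lambda h_1$ and the recovery formula, and your explicit treatment of the boundary (the derivative $\partial\pi_i/\partial h_i$ has numerator tending to $\bar f_i H_{-i}^2>0$ as $h_i\to0^+$, so the lower endpoint $\epsilon$ is never a best response for $\epsilon$ small, while Lemma~\ref{lem: hi<=M} keeps the upper endpoint slack) is in fact more careful than the paper's, which simply sets $h_i^{\text{NE}}=\max\{h_i^*,\epsilon\}$ and appeals to concavity. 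No gap; your argument stands on its own.
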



\begin{proof}    
    We analyze the solutions to Eq.~(\ref{FOC: quartic equation}). By Vieta's formulas, the products of the roots satisfies $\lambda_1 \lambda_2  \lambda_3 \lambda_4 = -\frac{\bar{f_2}}{\bar{f_1}} < 0$, indicating the roots of Eq.~(\ref{FOC: quartic equation}) must satisfy one of the following: (1) one positive real root and three negative real roots, (2) three positive real roots and one negative real root, or (3) one positive real root, one negative real root, and two complex real roots. We will eliminate the cases (2) and (3), thereby proving the existence of a unique positive real root.
    
    \quad Assume, for contradiction, that the roots consist of three positive positive real roots, denoted $\lambda_1, \lambda_2, \lambda_3$, and one negative real root, denoted $\lambda_4$. We further observe from Vieta's formulas that
    \begin{align}
        \begin{cases}
            \lambda_1+\lambda_2+ \lambda_3+\lambda_4 = -3 -\frac{2\bar{v_1}}{\bar{f_1}} < -3, \\
            \frac{1}{\lambda_1}+\frac{1}{\lambda_2}+\frac{1}{\lambda_3}+\frac{1}{\lambda_4} = -3 - \frac{2\bar{v_2}}{\bar{f_2}} < -3.
        \end{cases} \notag
    \end{align}
    Since $\lambda_1+\lambda_2+ \lambda_3+\lambda_4 < 0$, it follows that $\lambda_3 + \lambda_4 <0$. However, this implies $\frac{1}{\lambda_3} + \frac{1}{\lambda_4} > 0$, contradicting $\frac{1}{\lambda_1}+\frac{1}{\lambda_2}+\frac{1}{\lambda_3}+\frac{1}{\lambda_4} < 0$. This contradiction eliminates the case (2).
    
    \quad To eliminate the case (3), we observe that $P(0) =  - \bar{f}_2 < 0$, $P(-\frac{1}{2}) = \frac{3}{16} \bar{f}_1 + \frac{3}{4} \bar{v}_1 > 0$, and $P(-2) = -3\bar{f}_2-12\bar{v}_2 < 0$. Since $P(\lambda)$ is continuous in $\lambda$, there exists one root in the interval $\left(-\frac{1}{2}, 0\right)$ and another root in the interval $\left(-2, -\frac{1}{2}\right)$ by the intermediate value theorem. Thus, the polynomial has at least two negative real roots, leaving no possibility for two complex roots.

    \quad Having ruled out the cases (2) and (3), we conclude that the only remaining possibility is the case (1), where there exists exactly one positive real root, and it is unique.

    \quad We now prove that the unique positive solution to Eq.~(\ref{FOC: quartic equation}), denoted by $\lambda^*$, corresponds to the unique Nash equilibrium of the game. Let $h_1^* = \dfrac{\lambda^* \left(\bar{f}_1 \lambda^* + \bar{f}_1 + 2 \bar{v}_1\right)}{\left(1+2\lambda^*\right)\left(1+\lambda^*\right)}$, $h_2^* = \lambda^* h_1^*$, and $h_i^{\text{NE}} = \max \left\{ h_i^*, \epsilon \right\}$ for $i=1,2$. By Lemma~\ref{concavity}, the strategy profile $h^{\text{NE}} =\left(h_1^{\text{NE}}, h_2^{\text{NE}}\right)$ is a Nash equilibrium. Since the optimal solution \(h^* = \left( h_1^*, h_2^*\right)\) over the domain \(\left(0, \infty\right)\) is unique, the resulting equilibrium \( \left(h_1^{\text{NE}}, h_2^{\text{NE}}\right) \) is also unique.
    
\end{proof}

\quad After establishing the existence and uniqueness of the Nash equilibrium, we derive its closed-form solution. The proof relies on a computer-assisted technique.
\begin{theorem}
    The unique Nash equilibrium is given by:
    \begin{equation}
        \begin{aligned}
        \lambda^* &= -\frac{3 \bar{f_1} + 2 \bar{v_1}}{4 \bar{f_1}} + S + \frac{1}{2} \sqrt{-4S^2-2p - \frac{q}{S}} \label{lamda},\\
        h_1^* &= \dfrac{\lambda^* (\bar{f_1} \lambda^* + \bar{f_1} + 2 \bar{v_1})}{(1+2\lambda^*)(1+\lambda^*)} = \dfrac{\bar{f_2} + (\bar{f_2} + 2 \bar{v_2}) \lambda^*}{\lambda^* (2 + \lambda^*) (1 + \lambda^*)}, \\
        h_2^* &= \lambda^* h_1^*,\\
        h_i^{\text{NE}} &= \max \left\{ h_i^*, \epsilon \right\}, \text{ for } i = 1,2,
        \end{aligned}
    \end{equation}
    where 
    \begin{equation}
        \begin{aligned}
        p &= -\frac{11 \bar{f_1}^2 + 12 \bar{v_1}^2 + 4\bar{f_1}(4\bar{f_2} + \bar{v_1} + 8 \bar{v_2})}{8 \bar{f_1}^2},\\
        q &= \frac{3 \bar{f_1}^3 + 8 \bar{v_1}^3 + 4 \bar{f_1} \bar{v_1}(4 \bar{f_2} + \bar{v_1} + 8 \bar{v_2}) + \bar{f_1}^2 (-10 \bar{v_1} + 32 \bar{v_2})}{8 \bar{f_1}^3},\\
        \Delta_0 &= (2 \bar{f_1}-2 \bar{f_2}+4 \bar{v_1}-4 \bar{v_ 2})^2+3 (3 \bar{f_1}+2 \bar{v_1})(3 \bar{f_2}+2 \bar{v_2})-12 \bar{f_1} \bar{f_2}, \\
        \Delta_1 &= -27 \bar{f_2} (3 \bar{f_1} + 2 \bar{v_1})^2 + 72 \bar{f_1} \bar{f_2} (2 \bar{f_1} - 2 \bar{f_2} + 4 \bar{v_1} - 4 \bar{v_2} ) + 2 (2 \bar{f_1} - 2 \bar{f_2} + 4 \bar{v_1} - 4 \bar{v_2})^3 \\& \quad + 9(3 \bar{f_1} + 2 \bar{v_1})(2 \bar{f_1} - 2 \bar{f_2} + 4 \bar{v_1} - 4 \bar{v_2})(3 \bar{f_2} + 2 \bar{v_2}) + 27 \bar{f_1} (3 \bar{f_2} + 2 \bar{v_2})^2, \\
        \varphi &= \arccos{\left(\frac{\Delta_1}{2 \sqrt{\Delta_0^3}}\right)},\\
        S &= \frac{1}{2} \sqrt{-\frac{2}{3}p + \frac{2}{3 \bar{f_1}} \sqrt{\Delta_0} \cos{\frac{\varphi}{3}}}.
        \end{aligned} \label{p,q,S}
    \end{equation}
    
\end{theorem}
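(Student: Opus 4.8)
The plan is to recognize Eq.~\eqref{FOC: quartic equation} as a generic quartic and to solve it by Ferrari's method, with the branch of the solution pinned down by the root count already obtained in Theorem~\ref{uniquess of NE}. First I would divide $P(\lambda)$ by its leading coefficient $\bar{f}_1 > 0$ (positive by Assumption~\ref{fi_vi_assumption}) to obtain a monic quartic, and apply the Tschirnhaus shift $\lambda = y - \tfrac{3\bar{f}_1 + 2\bar{v}_1}{4\bar{f}_1}$ that removes the cubic term; this is exactly the additive constant appearing in front of $S$ in Eq.~\eqref{lamda}. The shift produces a depressed quartic $y^4 + p y^2 + q y + r = 0$, and a direct expansion identifies the coefficients with the stated $p$ and $q$ in Eq.~\eqref{p,q,S}; the remaining coefficient $r$ is recorded only implicitly, since it enters the later formulas solely through the invariants $\Delta_0$ and $\Delta_1$.

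Next I would introduce the resolvent cubic of the depressed quartic and solve it. The quantities $\Delta_0$ and $\Delta_1$ in Eq.~\eqref{p,q,S} are precisely the two standard quartic invariants $\Delta_0 = c^2 - 3bd + 12ae$ and $\Delta_1 = 2c^3 - 9bcd + 27b^2 e + 27 a d^2 - 72 ace$ evaluated at the coefficients of $P$, so matching them is a finite symbolic check. The key structural observation is that Theorem~\ref{uniquess of NE} already establishes that all four roots of $P$ are real (one positive, three negative). Consequently the resolvent cubic has three real roots, i.e.\ we are in the \emph{casus irreducibilis}, which forces $\Delta_0 > 0$ and $\bigl|\Delta_1 / (2\sqrt{\Delta_0^3})\bigr| \le 1$; this is exactly what makes $\varphi = \arccos\bigl(\Delta_1/(2\sqrt{\Delta_0^3})\bigr)$ well defined and real, and legitimizes the trigonometric (Vi\`ete) expression for $S$. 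Thus $S$ is a real square root of a real resolvent value, and the four roots of the quartic take the real form $\pm S \pm \tfrac12\sqrt{-4S^2 - 2p \mp q/S}$.

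Finally I would single out the branch. Because Theorem~\ref{uniquess of NE} guarantees exactly one positive real root, it suffices to verify that the combination $\lambda^* = -\tfrac{3\bar{f}_1 + 2\bar{v}_1}{4\bar{f}_1} + S + \tfrac12\sqrt{-4S^2 - 2p - q/S}$ is the positive one among the four, after which $h_1^*, h_2^*$ follow from Eq.~\eqref{solve for h1 from lamda} and the projection onto $[\epsilon,\infty)$ gives $h_i^{\mathrm{NE}}$. I expect the main obstacle to be twofold, and to be precisely where the computer-assisted aspect enters: (i) the raw symbolic verification that specializing the general Ferrari formula to the coefficients $a = \bar{f}_1$, $b = 3\bar{f}_1 + 2\bar{v}_1$, $c = 2\bar{f}_1 - 2\bar{f}_2 + 4\bar{v}_1 - 4\bar{v}_2$, $d = -(3\bar{f}_2 + 2\bar{v}_2)$, $e = -\bar{f}_2$ reproduces the displayed $p, q, \Delta_0, \Delta_1, S$ and that $P(\lambda^*) = 0$ identically in the parameters; and (ii) the sign bookkeeping needed to show that this particular choice of the two $\pm$ signs, and not one of the other three, yields the unique positive root, together with checking $S > 0$ so that $q/S$ is meaningful. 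Both are most cleanly discharged in a computer algebra system, with Theorem~\ref{uniquess of NE} supplying the qualitative input (all roots real, exactly one positive) that removes any ambiguity in selecting the branch.
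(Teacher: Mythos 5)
Your proposal follows essentially the same route as the paper: invoke Theorem~\ref{uniquess of NE} for the qualitative root structure (all four roots real, exactly one positive), apply the standard Ferrari/resolvent-cubic formulas in trigonometric form to get the four real roots, and then identify the positive branch by a sign comparison that is discharged with computer assistance. The paper makes the branch-selection step concrete by reducing it to the single inequality $8S^3 - q + 4S^2\sqrt{-4S^2 - 2p - q/S} > 0$ and numerically minimizing $8S^3 - q$ over the admissible parameters, which is exactly the ``sign bookkeeping'' you defer to a computer algebra system.
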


\begin{proof}
Since all the roots of Eq.~(\ref{FOC: quartic equation}) are real and exactly one of them is positive, the four real roots can be expressed as follows:
\begin{align}
    \lambda_{1,2} &= -\frac{3 \bar{f_1} + 2 \bar{v_1}}{4 \bar{f_1}} - S \pm \frac{1}{2} \sqrt{-4S^2-2p + \frac{q}{S}}, \notag\\ 
    \lambda_{3,4} &= -\frac{3 \bar{f_1} + 2 \bar{v_1}}{4 \bar{f_1}} + S \pm \frac{1}{2} \sqrt{-4S^2-2p - \frac{q}{S}}. \notag
\end{align}
where $p,q,S$ are defined in Eqs.~(\ref{p,q,S}). We want to show that $- S + \frac{1}{2} \sqrt{-4S^2-2p + \frac{q}{S}} < S + \frac{1}{2} \sqrt{-4S^2-2p - \frac{q}{S}}$, which identifies the positive (and largest) real root of the quartic equation as Eq.~(\ref{lamda}).
To prove this, observe that:
\begin{equation}
    \begin{aligned}
   & - S + \frac{1}{2} \sqrt{-4S^2-2p + \frac{q}{S}} < S + \frac{1}{2} \sqrt{-4S^2-2p - \frac{q}{S}}, \\
    \Longleftrightarrow & \sqrt{-4S^2-2p + \frac{q}{S}} < \ 4S +\sqrt{-4S^2-2p + \frac{q}{S}}, \\
    \Longleftrightarrow & 8S^3 - q + 4S^2\sqrt{-4S^2-2p - \frac{q}{S}} > 0.\label{last inequality for largest real root}
    \end{aligned}
\end{equation}

Here we rely on numerical computations. We find that the optimal value of the following minimization problem is approximately $3 \ (>0)$:
\begin{equation}
\begin{aligned}
    \min_{\bar{f_1}, \bar{f_2}, \bar{v_1}, \bar{v_2} \in \mathbb{R^+}} \quad & 8S^3 - q, \\
    \text{such that} \quad & \bar{f_1} \geq \bar{v_1}, \\
                        & \bar{f_2} \geq \bar{v_2}.
\end{aligned}
\label{optimization problem (solutions)}
\end{equation}
(The implementation details and code are provided in Appendix~\ref{appendix: codes}.)
This ensures that the inequality~(\ref{last inequality for largest real root}) holds,
and the theorem is proved.
\end{proof}


\quad We then explore how $h_1^*$ and $h_2^*$ vary when we change the parameters $\bar{f}_i$ and $\bar{v}_i$ for $i = 1,2$. Specifically, we examine whether, when one player is $k$ times more capable than the other in extracting MEV, the ratio of their optimal payments also equals $k$.

\begin{proposition}
    Let $k_1, k_2$ be given such that $\frac{\bar{v_1}}{\bar{f_1}} = \frac{\bar{v_2}}{\bar{f_2}} = k_1$ and $\frac{\bar{f_1}}{\bar{f_2}} = k_2$, where $0 < k_1 \leq 1$ as given by Assumption~\ref{fi_vi_assumption}. Then the following statements hold: \begin{itemize}
        \item If $k_2 < 1$, then $\frac{h_1^*}{h_2^*}>k_2$.
        \item If $k_2 = 1$, then $\frac{h_1^*}{h_2^*} = k_2 = 1$, and $h_1^*$ = $h_2^* = \frac{\bar{f_1} + \bar{v_1}}{3} = \frac{\bar{f_2} + \bar{v_2}}{3}$.
        \item If $k_2>1$, then $\frac{h_1^*}{h_2^*} < k_2$.
    \end{itemize}
    \label{prop: 2 player NE ratios}
\end{proposition}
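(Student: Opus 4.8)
The plan is to reduce all three assertions to the sign of the quartic $P$ from Eq.~(\ref{FOC: quartic equation}) at a single, well-chosen point. Since $h_2^* = \lambda^* h_1^*$, the quantity of interest is $\frac{h_1^*}{h_2^*} = \frac{1}{\lambda^*}$, so the claims are equivalent to $\lambda^* < \frac{1}{k_2}$ when $k_2 < 1$, $\lambda^* = 1$ when $k_2 = 1$, and $\lambda^* > \frac{1}{k_2}$ when $k_2 > 1$. I would exploit the structure already established in the proof of Theorem~\ref{uniquess of NE}: $P$ has a unique positive root $\lambda^*$, with $P(0) = -\bar f_2 < 0$ and $P(\lambda) \to +\infty$ as $\lambda \to \infty$ (leading coefficient $\bar f_1 > 0$). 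Hence $P < 0$ on $(0, \lambda^*)$ and $P > 0$ on $(\lambda^*, \infty)$, so for any $\lambda_0 > 0$ the sign of $P(\lambda_0)$ alone decides whether $\lambda_0$ lies below or above $\lambda^*$. The comparison point suggested by the statement is $\lambda_0 = \frac{1}{k_2}$.

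Next I would substitute the hypotheses $\bar v_1 = k_1 \bar f_1$, $\bar v_2 = k_1 \bar f_2$, and $\bar f_1 = k_2 \bar f_2$ into $P$; normalizing $\bar f_2 = 1$ (the root $\lambda^*$ is scale-invariant) turns $P$ into
\begin{equation*}
P(\lambda) = k_2 \lambda^4 + k_2(3+2k_1)\lambda^3 + 2(k_2-1)(1+2k_1)\lambda^2 - (3+2k_1)\lambda - 1.
\end{equation*}
Evaluating at $\lambda = 1/k_2$ and clearing denominators by the positive factor $k_2^3$ produces a cubic in $k_2$,
\begin{equation*}
k_2^3 \, P(1/k_2) = -k_2^3 + (2k_1-1)k_2^2 + (1-2k_1)k_2 + 1,
\end{equation*}
which, noticing that $k_2 = 1$ is a root, factors as $-(k_2-1)\bigl(k_2^2 + (2-2k_1)k_2 + 1\bigr)$.

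The only step needing an argument is the sign of the remaining quadratic $k_2^2 + (2-2k_1)k_2 + 1$. Its discriminant is $(2-2k_1)^2 - 4 = -4k_1(2-k_1)$, which is strictly negative for $0 < k_1 \leq 1$, so the quadratic is positive for all real $k_2$. Consequently $\operatorname{sign} P(1/k_2) = -\operatorname{sign}(k_2 - 1)$. When $k_2 < 1$ this gives $P(1/k_2) > 0$, hence $1/k_2 > \lambda^*$, i.e.\ $\frac{h_1^*}{h_2^*} = \frac{1}{\lambda^*} > k_2$; when $k_2 > 1$ it gives $P(1/k_2) < 0$, hence $1/k_2 < \lambda^*$ and $\frac{h_1^*}{h_2^*} < k_2$.

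Finally, for $k_2 = 1$ I would not rely on the sign argument but factor $P$ directly: with $\bar f_1 = \bar f_2$ and $\bar v_1 = \bar v_2$ the middle coefficient vanishes and $P(\lambda) = (\lambda^2 - 1)\bigl(\lambda^2 + (3+2k_1)\lambda + 1\bigr)$, whose only positive root is $\lambda^* = 1$; substituting $\lambda^* = 1$ into Eq.~(\ref{solve for h1 from lamda}) yields $h_1^* = h_2^* = \frac{\bar f_1 + \bar v_1}{3} = \frac{\bar f_2 + \bar v_2}{3}$. The main obstacle here is essentially bookkeeping—correctly evaluating and factoring $P(1/k_2)$ and verifying the quadratic's positivity—rather than any conceptual difficulty; once the evaluation point $1/k_2$ is chosen, the monotonicity of $P$ through its unique positive root does the rest.
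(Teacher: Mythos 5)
Your proof is correct and follows essentially the same route as the paper: both arguments reduce the claim to determining the sign of $P(1/k_2)$, using that $P$ has a unique positive root with $P(0)<0$, and both handle $k_2=1$ by exhibiting $\lambda=1$ as the positive root and substituting into Eq.~(\ref{solve for h1 from lamda}). The only divergence is the final algebraic step — the paper establishes the sign of $G(k_1,k_2)=P(1/k_2)$ by differentiating in $k_1$ and comparing with the boundary value $G(1,k_2)=\frac{\bar f_2}{k_2^3}(1-k_2)(1+k_2^2)$, whereas you factor $k_2^3P(1/k_2)=-(k_2-1)\bigl(k_2^2+(2-2k_1)k_2+1\bigr)$ directly and check that the quadratic factor has negative discriminant; both computations are valid.
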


\begin{proof}
    Since $\frac{h_1}{h_2} = \frac{1}{\lambda}$, proving the theorem is equivalent to proving that the largest real root of Eq.~(\ref{FOC: quartic equation}), denoted by $\lambda_4$, satisfies
    \begin{equation*}
        \lambda_4 <  \frac{1}{k_2} \text{ for } k_2 < 1, \ \lambda_4 =  \frac{1}{k_2} = 1 \text{ for } k_2 = 1, \ \lambda_4 >  \frac{1}{k_2} \text{ for } k_2 > 1. 
    \end{equation*} 
    Given that Eq.~(\ref{FOC: quartic equation}) satisfies $P(0) = -\bar{f}_2 < 0$, it follows that proving the desired inequalities for $\lambda_4$ is equivalent to verifying that $P(\frac{1}{k_2})$ is positive for $k_2 < 1$, zero for $k_2 = 1$, and negative for $k_2 > 1$. Since $P(\lambda)$ is continuous in $\lambda$ and $P(0) < 0$, it suffices to establish the sign of $P(\frac{1}{k_2})$.
    
    \quad Rewriting $P(\lambda)$ in terms of $k_1, k_2$, we obtain:
    \begin{equation*}
        P(\lambda;k_1, k_2) = \bar{f} _2\left[k_2 \lambda^4 + \left( 3 k_2 + 2 k_1 k_2\right) \lambda^3 + \left( 2 k_2 -2 + 4 k_1k_2-4 k_1\right) \lambda^2 - \left( 3+2k_1 \right)\lambda - 1 \right] = 0 \label{FOC:quartic equation with k1, k2}.
    \end{equation*}
    Define $G(k_1,k_2)$ as the evaluation of $P(\lambda;k_1,k_2)$ at $\lambda = \frac{1}{k_2}$.
    \begin{equation*}
        G(k_1, k_2) = P(\lambda =\frac{1}{k_2};k_1,k_2) = \frac{\bar{f}_2}{k_2^3} \left[1 + \left( 2k_2+1+4k_1k_2-2k_1\right) k_2 - \left( 3+2k_1 \right) k_2^2 - k_2^3 \right].
    \end{equation*}
    Differentiating with respect to $k_1$, we obtain
    \begin{equation*}
        \frac{\partial G(k_1, k_2)}{\partial k_1} = \frac{2 \bar{f}_2  (k_2 - 1)}{k_2^2}.
    \end{equation*}
    Since $\frac{\partial G(k_1, k_2)}{\partial k_1} < 0$ for all $k_1 \in \left( 0,1 \right]$ when $k_2 \in \left( 0,1 \right)$, it follows that $G(k_1,k_2) > G(1, k_2)$ for all $k_1 \in \left(0, 1\right)$. Evaluating $G(1,k_2)$, we find
    \begin{equation*}
        G(1,k_2) = \frac{\bar{f}_2}{k_2^3} \left( 1 + k_2^2 - k_2 - k_2^3 \right) = \frac{\bar{f}_2}{k_2^3}  \left( 1-k_2 \right) \left( 1+k_2^2 \right),
    \end{equation*}
    which is positive for $k_2 \in \left( 0,1\right)$. Therefore, $G(k_1, k_2) > 0$ for all $k_1 \in \left( 0,1 \right]$ and $k_2 \in \left( 0,1 \right)$.
    
    \quad Similarly, for $k_2 > 1$, we have $\frac{\partial G(k_1, k_2)}{\partial k_1} > 0$ for all $k_1 \in \left( 0,1 \right]$, implying that $G(k_1,k_2) < G(1, k_2)$ for all $k_1 \in \left(0, 1\right)$. Since $G(1,k_2) = \frac{\bar{f}_2}{k_2^3}  \left( 1-k_2 \right) \left( 1+k_2^2 \right) < 0$ for $k_2 > 1$, it follows that $G(k_1, k_2) < 0$ in this case. 
    
    \quad For $k_2 = 1$, direct substitution yields $G(k_1,k_2 = 1)= 0$, which implies that $P(\lambda = 1;k_1,k_2) = 0$, and hence $\lambda = 1$ is a root of $P(\lambda)$. Consequently, it follows that $h_1 = h_2$. Additionally, when $k_2 = 1$, we have $\bar{f}_1 = \bar{f}_2$, $\bar{v}_1 = \bar{v}_2$. Substituting these identities into Eq.~(\ref{FOC for NE}), we obtain:
    \begin{equation*}
        h_1 = h_2 = \frac{\bar{f}_1 + \bar{v}_1}{3} = \frac{\bar{f}_2 + \bar{v}_2}{3}.
    \end{equation*}
    This completes the proof.
\end{proof}

\quad Proposition~\ref{prop: 2 player NE ratios} shows that when a player's ability to extract MEV, both with and without the order being auctioned in the order flow auction, is $k$ times that of the other player (in expectation), their equilibrium payments do not scale proportionally with $k$. In particular, if player $2$ can obtain more MEV, and this amount is $k$ times that of player $1$ in expectation, then player $2$’s optimal payment is less than $k$ times the payment of player $1$.

\quad For example, consider the case where $\bar{f}_1 = 100, \bar{f}_2 = 200, \bar{v}_1= 40$, and $\bar{v}_2 =80$. At equilibrium, the corresponding bids are $h_1 = 49.94$ and $h_2 = 82.17$. After a single round of the game, player $1$'s expected revenue is $24.64$, while player $2$'s expected revenue is $104.23$. Although player $2$’s MEV extraction capability is only twice that of player $1$, they ultimately earn more than four times player $1$’s revenue, in expectation. This illustrates that the gap between their MEV extraction abilities is amplified by the interaction between the order flow auction and the block-building auction.

\quad For $M=3$ and $M=4$, we conduct numerical experiments to study the equilibrium solutions, as detailed in Section~\ref{sec: simulation for three builders}.

\section{Evolution of Validators’ Stake Shares} \label{sec: validators' shares}

\quad In this section, we analyze the dynamics of validators' stake shares over time. 
Each time a validator is selected to propose a block, their reward comes either from the bid submitted by the winning builder or from the MEV they extract by constructing the block themselves.
\begin{theorem} \label{thm: martingale--validator}
    Validator $j$'s stake share, $\left(\omega_{j,t}\right)_{t \geq 0}$, is a martingale, and the limit $\omega_{j, \infty}\coloneqq \lim_{t \to \infty} \omega_{j,t}$ exists almost surely, with $\mathbf{E} \left( \omega_{j,\infty} \right) = \omega_{j,0}$.
\end{theorem}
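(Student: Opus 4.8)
The plan is to verify the defining martingale identity $\mathbf{E}\!\left(\omega_{j,t} \mid \mathcal{F}_{t-1}\right) = \omega_{j,t-1}$ directly from the stake update rule, and then invoke the martingale convergence theorem to obtain the almost-sure limit. Writing $a = s_{j,t-1}$ and $b = S_{t-1}$, both of which are $\mathcal{F}_{t-1}$-measurable, and substituting the update rule for $s_{j,t}$ together with the evolution \eqref{eq:S_t} for $S_t$, I would express
\begin{equation*}
\omega_{j,t} = \frac{a\left(1 - \frac{\alpha}{b^{1+\gamma}}\right) + R_t\,\mathbf{1}\!\left(X_{j,t}\right)}{b + R_t - \frac{\alpha}{b^{\gamma}}}.
\end{equation*}

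First I would condition on $R_t$ in addition to $\mathcal{F}_{t-1}$. By Assumption~\ref{assump: independence of validator selection}, the selection event $X_{j,t}$ is independent of $R_t$, so $\mathbf{E}\!\left(\mathbf{1}(X_{j,t}) \mid \mathcal{F}_{t-1}, R_t\right) = \omega_{j,t-1} = a/b$. The decisive algebraic step is that the resulting numerator factors exactly:
\begin{equation*}
a\left(1 - \frac{\alpha}{b^{1+\gamma}}\right) + \frac{a}{b}\,R_t = \frac{a}{b}\left(b + R_t - \frac{\alpha}{b^{\gamma}}\right),
\end{equation*}
which is precisely $\frac{a}{b}$ times the denominator. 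Hence $\mathbf{E}\!\left(\omega_{j,t} \mid \mathcal{F}_{t-1}, R_t\right) = a/b = \omega_{j,t-1}$ pointwise in $R_t$, and taking a further expectation over $R_t$ removes the remaining conditioning to yield the martingale identity. The structural reason this works is that the staking cost is proportional to the stake share, so the deterministic shrinkage factor $1 - \alpha/b^{1+\gamma}$ acts uniformly on numerator and denominator, while the random reward contributes $\omega_{j,t-1} R_t$ in expectation, exactly matching the $R_t$ term in the denominator.

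For convergence, I would note that $\omega_{j,t} \in [0,1]$ for all $t$: the total stake $S_t$ is strictly positive and increasing (Lemma~\ref{lemma: S_t increasing}, with Assumption~\ref{assump: bound on alpha} ensuring $\alpha < S_{t-1}^{1+\gamma}$ so that each $s_{j,t}$ stays nonnegative), whence $0 \le s_{j,t} \le S_t$. A bounded martingale converges almost surely, so $\omega_{j,\infty} = \lim_{t\to\infty} \omega_{j,t}$ exists a.s. Boundedness also yields uniform integrability, so the convergence holds in $L^1$ and therefore $\mathbf{E}(\omega_{j,\infty}) = \mathbf{E}(\omega_{j,0}) = \omega_{j,0}$.

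The main obstacle is really just confirming the exact cancellation in the numerator, i.e., that the cost and reward terms are calibrated so that the conditional expectation telescopes cleanly to $a/b$; this is where the particular choice of the cost model $\alpha\, s_{j,t-1}/S_{t-1}^{1+\gamma}$ earns its keep. The remaining points (boundedness in $[0,1]$ and the appeal to the convergence theorem) are routine once positivity of $s_{j,t}$ is secured through the assumption on $\alpha$.
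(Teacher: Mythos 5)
Your proof is correct and follows essentially the same route as the paper: both verify the martingale identity by exploiting the exact cancellation $s_{j,t-1}(1-\alpha/S_{t-1}^{1+\gamma}) + \omega_{j,t-1}R_t = \omega_{j,t-1}S_t$ (the paper phrases this as $\mathbf{E}\bigl[(S_t - \alpha/S_t^{\gamma} + R_{t+1})/S_{t+1}\mid\mathcal{F}_t\bigr]=1$), and both conclude via martingale convergence plus boundedness to get $\mathbf{E}(\omega_{j,\infty})=\omega_{j,0}$. Your explicit extra conditioning on $R_t$, justified by Assumption~\ref{assump: independence of validator selection}, is merely a more careful rendering of the same step the paper performs implicitly when it pulls $\omega_{j,t}$ out of the expectation.
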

\begin{proof}
For each $j$,
    \begin{equation}
    \label{eq:wjtx}
    \begin{aligned}
         \omega_{j,t+1} &= \frac{s_{j,t+1}}{S_{t+1}},  \\
        &=  \frac{s_{j, t} + R_{t+1} \mathbf{1}\left(X_{j,t+1}\right) - \alpha \frac{s_{j,t}}{S_{t}^{1+\gamma}}}{S_{t+1}}, \\
        &= \omega_{j,t} \frac{S_t - \alpha / {S_t^{\gamma}}}{S_{t+1}} + \frac{R_{t+1}}{S_{t+1}} \mathbf{1}\left(X_{j,t+1}\right) .
    \end{aligned}
\end{equation}
Noting that $\mathbf{E} \left[\mathbf{1} \left(X_{j,t+1}\right)|\mathcal{F}_t\right] = \omega_{j,t}$, we have
\begin{equation}
    \begin{aligned}
        \mathbf{E} \left[\omega_{j,t+1} | \mathcal{F}_t\right] &= \omega_{j,t} \mathbf{E} \left[\frac{S_t-\alpha / S_t^{\gamma}+R_{t+1}}{S_{t+1}} \middle| \mathcal{F}_t \right], \\
        &= \omega_{j,t} .\label{martingale w_jt}
    \end{aligned}
\end{equation}
where Eq.~(\ref{martingale w_jt}) holds because $S_{t+1} = S_t - \alpha / S_t^{\gamma} + R_{t+1}$. This shows that $\left(\omega_{j,t}\right)_{t\geq0}$ is a martingale. Since it's a non-negative martingale, the martingale convergence theorem ensures that the limit $\omega_{j,\infty}:=\lim_{t \to \infty} \omega_{j,t}$ exists almost surely. Furthermore, as $\omega_{j,t}$ is bounded, the bounded convergence theorem implies that $\mathbf{E} \left( \omega_{j,\infty}\right) = \lim_{t \to \infty} \mathbf{E} \left( \omega_{j,t} \right) = \omega_{j,0} $. 
\end{proof}

\quad This proposition suggests that centralization is unlikely to occur in the validator space, as their stake shares follow a martingale process and depend  only on their initial shares. We next analyze the evolution of the total stake controlled by the validators in the system. Recall from Section~\ref{sec: modeling--validators} that the coefficient for the cost of the system, $\alpha$, is linked to the builder's bid through Assumption~\ref{assump: bound on alpha}.

\begin{lemma}
    Under Assumption~\ref{assump: bound on alpha}, the sequence $\{S_t\}_{t \ge 1}$ is strictly increasing. \label{lemma: S_t increasing}
\end{lemma}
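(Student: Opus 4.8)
The plan is to read the monotonicity condition directly off the update rule \eqref{eq:S_t} and then close it with a short induction. Writing the recursion as $S_t - S_{t-1} = R_t - \alpha/S_{t-1}^{\gamma}$, strict monotonicity at step $t$ is \emph{equivalent} to the inequality $R_t\, S_{t-1}^{\gamma} > \alpha$. Since $R_t \geq R_{\min}$ holds deterministically (by the definition of $R_{\min}$ as the smallest possible value of the i.i.d.\ variable $R_t$), it suffices to prove $R_{\min}\, S_{t-1}^{\gamma} > \alpha$, which in turn reduces to bounding $S_{t-1}$ from below.

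First I would establish the base case $t=1$. Because $S_0 > 0$ and $\alpha < S_0^{\gamma} R_{\min}$ by Assumption~\ref{assump: bound on alpha}, we have $\alpha/S_0^{\gamma} < R_{\min} \leq R_1$, so $S_1 = S_0 + R_1 - \alpha/S_0^{\gamma} > S_0$. I would then prove by induction the slightly stronger claim that $S_t \geq S_0$ for every $t \geq 0$ (so that, together with the step inequality, one gets $S_t > S_{t-1} \geq \cdots \geq S_0 > 0$). Assuming $S_{t-1} \geq S_0$, the key observation is that the cost term $x \mapsto \alpha/x^{\gamma}$ is nonincreasing in $x$ for $\gamma \geq 0$; hence $\alpha/S_{t-1}^{\gamma} \leq \alpha/S_0^{\gamma} < R_{\min} \leq R_t$. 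Substituting this into \eqref{eq:S_t} yields $S_t > S_{t-1} \geq S_0$, which simultaneously advances the induction and certifies strict monotonicity at step $t$.

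The argument is essentially mechanical, so there is no serious obstacle; the only point requiring care is the \emph{direction} of the induction. One must bound $S_{t-1}$ from below—not from above—so that the subtracted cost $\alpha/S_{t-1}^{\gamma}$ stays uniformly below $R_{\min}$ across all $t$; this is exactly where the nonincreasing monotonicity of $x\mapsto \alpha/x^{\gamma}$ (valid precisely because $\gamma \geq 0$) and the positivity $S_{t-1} \geq S_0 > 0$ are jointly used. Finally, I would note that of the three clauses in Assumption~\ref{assump: bound on alpha}, it is the single inequality $\alpha < S_0^{\gamma} R_{\min}$ that carries the entire proof; the remaining clauses $\alpha < S_0$ and $\alpha < R_{\min}$ are not needed for monotonicity and are reserved for the positivity and boundedness arguments used elsewhere.
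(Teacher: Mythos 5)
Your proof is correct and follows essentially the same route as the paper's: both establish the base case from $\alpha < S_0^{\gamma} R_{\min}$ and then use the monotonicity of $x \mapsto \alpha/x^{\gamma}$ together with the inductively maintained lower bound $S_{t-1} \geq S_0$ (the paper phrases this as $\alpha/S_k^{\gamma} < (S_0/S_k)^{\gamma} R_{\min} < R_{k+1}$). Your closing observation that only the clause $\alpha < S_0^{\gamma} R_{\min}$ of Assumption~\ref{assump: bound on alpha} is actually used here is accurate and a nice touch.
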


\begin{proof}
    We prove the lemma by induction. At $t=1$, we have
    \begin{equation*}
        S_1 = S_0 + R_1 - \frac{\alpha}{S_0^\gamma}.
    \end{equation*}
 By Assumption~\ref{assump: bound on alpha}, it follows that $\frac{\alpha}{S_0^\gamma} < \min\left\{ k_1, k_2, \ldots, k_M \right\}$ where $k_i = \max\left\{ \beta_v, \mu r_i^{\min} \right\}$ for each $i \in \left[ M\right]$. Since $R_1 \geq R_{\min}$, we conclude that $\frac{\alpha}{S_0^\gamma} < \min\left\{ k_1, k_2, \ldots, k_M \right\} \leq R_1$, which implies $S_1 > S_0$. Suppose that $S_0 < S_1 < \ldots < S_k$ holds for some $k > 1$. At $t = k+1$, 
\begin{equation*}
    S_{k+1} = S_k + R_k -\frac{\alpha}{S_k^\gamma}.
\end{equation*}
Similarly, $\frac{\alpha}{S_k^\gamma} < \left(\frac{S_0}{S_k}\right)^\gamma \cdot \min\left\{ k_1, k_2, \ldots, k_M \right\} < R_k$ since $S_0 < S_k$. Thus, $S_{k+1} > S_k$. This completes the induction and proves the lemma. 
\end{proof}

\begin{proposition} \label{prop: long-time behavior of St} (Long-time behavior of $S_t$)
The following results hold:
\begin{enumerate}
    \item The process $\left( S_t, t \geq 0 \right)$ is an $\mathcal{F}_t$-sub-martingale, and its compensator is
    \begin{equation*}
        A_t = Rt - \alpha\sum_{k=0}^{t-1} \frac{1}{S_k^\gamma}.
    \end{equation*}
    \item There is the convergence in probability:
        \begin{enumerate}
            \item If $\gamma > 0$, then
            \begin{equation*}
                \frac{S_t}{t} \to R \hspace{1em} \text{ as } t \to \infty.
            \end{equation*}
            \item If $\gamma = 0$, then
            \begin{equation*}
                \frac{S_t}{t} \to  R -\alpha \hspace{1em} \text{ as } t \to \infty.
            \end{equation*}
        \end{enumerate}
\end{enumerate}
\end{proposition}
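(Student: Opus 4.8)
The plan is to treat both parts through the Doob decomposition of the discrete-time process $(S_t)$. For part (1), I would compute the one-step conditional expectation directly from the recursion~\eqref{eq:S_t}: since $R_{t+1}$ is independent of $\mathcal{F}_t$ (Assumption~\ref{assump: independence of validator selection}) and $S_t$ is $\mathcal{F}_t$-measurable,
\[
\mathbf{E}\!\left[S_{t+1} \mid \mathcal{F}_t\right] = S_t + R - \frac{\alpha}{S_t^\gamma}.
\]
Summing the predictable increments $R - \alpha/S_k^\gamma$ over $k = 0, \dots, t-1$ recovers exactly the claimed compensator $A_t = Rt - \alpha \sum_{k=0}^{t-1} S_k^{-\gamma}$, and setting $M_t := S_t - A_t$ produces a martingale. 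To conclude that $(S_t)$ is a submartingale it then suffices to verify that each compensator increment is nonnegative, i.e.\ $R \ge \alpha/S_t^\gamma$. This follows from Lemma~\ref{lemma: S_t increasing} (so $S_t \ge S_0$ and hence $\alpha/S_t^\gamma \le \alpha/S_0^\gamma$) together with Assumption~\ref{assump: bound on alpha}, which gives $\alpha/S_0^\gamma < R_{\min} \le R$.

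For part (2), the key preliminary step is to show that $S_t \to \infty$ almost surely. Using the same uniform estimate $\alpha/S_k^\gamma \le \alpha/S_0^\gamma < R_{\min}$, I would derive the pathwise lower bound
\[
S_t \ge S_0 + \sum_{k=1}^t R_k - \frac{\alpha t}{S_0^\gamma},
\]
and then invoke the strong law of large numbers for the bounded i.i.d.\ sequence $(R_k)$ to obtain $\liminf_{t} S_t/t \ge R - \alpha/S_0^\gamma > 0$ almost surely; in particular $S_t \to \infty$.

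The final step splits $S_t/t = M_t/t + A_t/t$. The martingale part is the centered random walk $M_t = S_0 + \sum_{k=1}^t (R_k - R)$, so $M_t/t \to 0$ almost surely by the strong law. For the compensator, $A_t/t = R - (\alpha/t)\sum_{k=0}^{t-1} S_k^{-\gamma}$; when $\gamma > 0$ the divergence $S_t \to \infty$ gives $S_k^{-\gamma} \to 0$, whence the Cesàro average vanishes and $A_t/t \to R$, while for $\gamma = 0$ the sum collapses to $\alpha t$ and $A_t/t \to R - \alpha$. Combining the two pieces yields the stated limits (and in fact almost sure convergence, which is stronger than the asserted convergence in probability).

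The main obstacle is controlling the compensator term $\sum_k S_k^{-\gamma}$, which feeds back into the process through $S_k$ itself; the argument hinges on \emph{first} establishing $S_t \to \infty$ and only then passing this through a Cesàro limit. I would be careful that the lower bound forcing $S_t \to \infty$ relies solely on the uniform estimate $\alpha/S_k^\gamma < R_{\min}$ (valid for every $k$ via $S_k \ge S_0$), so that no circularity arises between the divergence of $S_t$ and the averaging of $S_k^{-\gamma}$. The remaining ingredients — the Doob decomposition and the strong law for the bounded i.i.d.\ increments $R_k$ — are routine.
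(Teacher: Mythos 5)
Your argument is correct, and for part (2) it takes a genuinely different route from the paper. Part (1) matches the paper's (terser) computation of $\mathbf{E}[S_{t+1}\mid\mathcal{F}_t]=S_t+R-\alpha/S_t^{\gamma}$; you additionally verify nonnegativity of the compensator increments via Assumption~\ref{assump: bound on alpha}, which the paper leaves implicit. For part (2), the paper applies the method of moments: it shows by induction that $\mathbf{E}(S_t^k)\sim R^k t^k$ (resp.\ $(R-\alpha)^k t^k$) for every $k$, and deduces convergence in distribution, hence in probability, of $S_t/t$ to the constant limit. You instead work pathwise through the Doob decomposition $S_t=M_t+A_t$, observing that $M_t=S_0+\sum_{k=1}^t(R_k-R)$ is a centered random walk with bounded increments, so $M_t/t\to0$ a.s.\ by the strong law, and that the compensator term is handled by a Cesàro argument once $S_t\to\infty$ a.s.\ is secured from the uniform bound $\alpha/S_k^{\gamma}\le\alpha/S_0^{\gamma}<R_{\min}$. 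Your route is more elementary (no induction over all moments), yields the stronger conclusion of almost sure convergence, and makes rigorous the step the paper asserts without proof (``it is clear that with probability one $S_t\to\infty$''), which in the paper's own argument is needed before the moment asymptotics can be extracted. The only slight blemish is your citation of Assumption~\ref{assump: independence of validator selection} for the independence of $R_{t+1}$ from $\mathcal{F}_t$: that assumption concerns the selection events $X_{j,t}$, whereas the independence of the rewards comes from the i.i.d.\ specification of $(R_t)_{t\ge1}$ in Section~\ref{sec: modeling--validators}; this does not affect the validity of the proof.
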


\begin{proof}
(1) It suffices to note that $\mathbf{E}\left( S_{t+1} | \mathcal{F}_t \right) = S_t + R -\frac{\alpha}{S_t^\gamma}$.

(2) Apply the method of moments by computing $\mathbf{E} \left( S_{t}^{k} \right)$ for all $k$. For $k=1$, we have by definition:
    \begin{align*}
        \mathbf{E}\left( S_{t+1} - S_{t} | S_t = s\right) &= R - \frac{\alpha}{s^{\gamma}}, \\
        &= \begin{cases}
            R & \text{if } \gamma >0\\
            R - \alpha & \text{if } \gamma = 0
        \end{cases}
        \quad \text{as } s \to \infty .
    \end{align*}
It is clear that with probability one $S_t \to \infty$ as $t \to \infty$ when $\gamma \geq 0$. As a result, 
\begin{equation*}
\begin{cases}
    \mathbf{E}\left( S_{t+1} - S_{t} \right) \to R \quad \text{as } t \to \infty & \text{ if } \gamma > 0, \\ \mathbf{E}\left( S_{t+1} - S_{t} \right) \to R - \alpha \quad \text{as } t \to \infty & \text{ if } \gamma = 0.
\end{cases}
\end{equation*}
which yields
\begin{equation*}
\begin{cases}
     \mathbf{E} \left( S_t \right) \sim Rt \quad \text{as } t\to \infty & \text{ if } \gamma > 0,\\
    \mathbf{E} \left( S_t \right) \sim \left(R-\alpha\right) t  \quad \text{as } t\to \infty  & \text{ if } \gamma = 0.
\end{cases}
\end{equation*}
Next for $k = 2$, we have:
    \begin{align*}
        \mathbf{E}\left( S_{t+1}^2 - S_{t}^2 | S_t = s\right) &=
       \mathbf{E} \left(R_{t+1}^2\right) + \frac{\alpha^2}{s^{2\gamma}} + 2sR -2\frac{\alpha s}{s^{\gamma}}-2 \frac{\alpha}{s^{\gamma}} R , \\
       &=\begin{cases}
           \mathbf{E} \left(R_{t+1}^2\right) + 2Rs + \mathcal{O}(s^{1-\gamma}) & \text{if } \gamma>0 \\
           \mathbf{E} \left(R_{t+1}^2\right) + \alpha^2 + 2\left(R-\alpha\right)s -2\alpha R & \text{if } \gamma=0
       \end{cases}
       \quad \text{ as } s \to \infty.
    \end{align*}
Thus, 
\begin{align*}
\begin{cases}
    \mathbf{E}\left( S_{t+1}^2 - S_{t}^2\right) =  \left( 2R + o(1) \right) \mathbf{E}\left(S_{t}\right) \sim 2R^2t & \text{if } \gamma > 0, \\
    \mathbf{E}\left( S_{t+1}^2 - S_{t}^2\right) = \left( 2\left(R-\alpha\right) + o(1) \right) \mathbf{E}\left(S_{t}\right) \sim 2\left(R-\alpha\right)^2t & \text{if } \gamma=0.
\end{cases}
\end{align*}
Then we get:
\begin{equation*}
    \begin{cases}
        \mathbf{E}\left( S_{t}^2 \right) \sim R^2 t^2 & \text{ if } \gamma > 0\\
        \mathbf{E}\left( S_{t}^2 \right) \sim \left(R-\alpha\right)^2 t^2 & \text{ if } \gamma = 0
    \end{cases}
    \quad \text{ as } t \to \infty.
\end{equation*}
We proceed by induction. Assume that 
\begin{equation*}
\begin{cases}
    \mathbf{E}\left( S_{t}^k \right) \sim R^k t^k & \text{ if } \gamma > 0 \\
    \mathbf{E}\left( S_{t}^k \right) \sim \left(R-\alpha\right)^k t^k & \text{ if } \gamma = 0
\end{cases}
\quad \text{ as } t \to \infty.
\end{equation*}
We obtain:
\begin{equation*}
\begin{cases}
 \mathbf{E}\left( S_{t+1}^{k+1} - S_{t}^{k+1}\right) =  \left( \left(k+1\right)R + o(1) \right) \mathbf{E}\left(S_{t}^{k}\right) \sim \left(k+1\right)R^{k+1} t^k & \text{ if } \gamma > 0, \\ \mathbf{E}\left( S_{t+1}^{k+1} - S_{t}^{k+1}\right) = \left( \left(k+1\right) \left(R-\alpha\right) + o(1) \right) \mathbf{E}\left(S_{t}^k\right) \sim \left(k+1\right) \left(R-\alpha\right)^{k+1} t^k & \text{ if } \gamma=0. 
 \end{cases}
 \end{equation*}
 Thus, we have:
\begin{equation*}
\begin{cases}
    \mathbf{E} \left( S_t^{k+1} \right) \sim R^{k+1} t^{k+1} \quad \text{as } t\to \infty & \text{ if } \gamma > 0, \\ \mathbf{E} \left( S_t^{k+1} \right) \sim \left(R-\alpha\right)^{k+1} t^{k+1}  \quad \text{as } t\to \infty & \text{ if } \gamma = 0.
    \end{cases}
\end{equation*}
By the method of moments, $S_t/t$ converges in distribution, and thus in probability to $R$ if $\gamma > 0$, and to $R - \alpha$ if $\gamma = 0$.  
\end{proof}

\quad In general, it is difficult to make explicit the distribution of $\omega_{j, \infty}$.
Nevertheless, in the case where $\alpha = 0$, the following proposition characterizes the distribution of $\omega_{j, \infty}.$

\quad Recall from Section~\ref{sec: modeling--validators} that $\left\{R_t\right\}_{t \geq 1}$ is a sequence of i.i.d random variables, where 
\[
R_t = \max \left\{ \mu r_i, \beta_v \right\},
\]
with probability 
\[
p_i \coloneqq \frac{h_i}{\sum_{k=1}^M h_k} \quad \text{ for } i \in \left[ M \right].
\]


Note that $r_i$ is a random variable. Let $u$ denote the probability density function of $R_t$.
Let $\mathcal{K} = \left[R_{\min}, R_{\max} \right]$ denote the support of $R_t$ for all $t \geq 1$. Recall from Section~\ref{sec: modeling--validators} that $R_{\min} \coloneqq \min\left\{ k_1, k_2, \ldots, k_M \right\}$ and $R_{\max} \coloneqq \max\left\{ g_1, g_2, \ldots, g_M \right\}$, where $k_i = \max\left\{ \beta_v, \mu r_i^{\min} \right\}$ and $g_i = \max\left\{ \beta_v, \mu r_i^{\max} \right\}$ for each $i \in \left[ M\right]$. Let $\mathscr{P} \left(\left[ 0,1\right]\right)$ denote the space of distribution functions with support in $\left[0,1 \right]$, and $\mathcal{S} \coloneqq \left[0, \infty \right) \times \left[0, \infty \right) \backslash \left\{ \left( 0,0\right) \right\}$. 
For any $x \in \mathbb{R}$, let $\delta_x$ be the distribution of the point mass at $x$. Define the function
\begin{equation*}
    \mathbf{F}_j : \mathcal{S} \to \mathscr{P} \left(\left[ 0,1\right]\right),
\end{equation*}
that maps the initial stake pair $\left(s_{j,0}, S_0 - s_{j,0}\right)$, where $S_0 = \sum_{j=1}^N s_{j,0}$, to the probability distribution $\mathbf{F}_j \left( s_{j,0}, S_0 - s_{j,0} \right)$ of the limiting stake share $\omega_{j, \infty}$. Recall that the initial stake share is given by $\omega_{j,0} = \frac{s_{j,0}}{S_0}$.

\begin{proposition} \label{prop: dist of limit of shares}
    When $\alpha = 0$, for all $\left(s_{j,0}, S_0-s_{j,0} \right) \in \mathcal{S}$, the distribution function of $\omega_{j, \infty}$ satisfies
    \begin{eqnarray} \label{eq: limiting dist characterization}
    \mathbf{F}_j\left(s_{j,0}, S_0 - s_{j,0}\right) &=& \omega_{j,0} \int_{\mathcal{K}} \mathbf{F}_j  \left( s_{j,0}+m, S_0 - s_{j,0} \right) u\left(m\right)dm  \nonumber\\
    &+&\left( 1- \omega_{j,0}\right) \int_{\mathcal{K}} \mathbf{F}_j  \left( s_{j,0}, S_0 - s_{j,0}+m \right) u\left(m\right)dm,
    \end{eqnarray}
    and it is the unique solution to Eq.~(\ref{eq: limiting dist characterization}) among the continuous functions $\mathscr{G}: \mathcal{S} \to \mathscr{P} \left(\left[ 0,1\right]\right)$ satisfying the following three conditions:
    \begin{enumerate}
        \item $\mathscr{G} \left(0,a \right) = \delta_0$ for $a>0$;
        \item $\mathscr{G} \left(a,0\right) = \delta_1$ for $a>0$;
        \item For every $\epsilon > 0$, there exists a $C = C \left( \epsilon \right)$ such that
        \begin{equation*}
            d_w\left( \mathscr{G} \left(s_{j,0},S_0 - s_{j,0} \right), \delta_{\omega_{j,0}} \right) < \epsilon,
        \end{equation*}
        if $S_0 > C$, where 
        \begin{equation*}
            d_w \left(F,G\right) = \int_0^1 |F(x) - G(x)|dx \quad\text{ for all } F,G \in \mathscr{P} \left(\left[ 0,1\right]\right). 
        \end{equation*}
    \end{enumerate}
\end{proposition}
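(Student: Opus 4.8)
The plan is to recognize the $\alpha=0$ dynamics as a two-type, time-homogeneous generalized Pólya urn, read off the functional equation \eqref{eq: limiting dist characterization} by a one-step (strong Markov) decomposition, and then prove uniqueness by converting the equation into a submartingale inequality for $d_w$ and pushing the urn into the regime governed by condition~(3). Concretely, I would collapse all validators other than $j$ into a single ``type-$2$'' coordinate and track the state $\left(s_{j,t},\,S_t-s_{j,t}\right)$. When $\alpha=0$ this is the time-homogeneous Markov chain that, from a state $(s,a)$, draws $m\sim u$ and moves to $(s+m,a)$ with probability $s/(s+a)$ and to $(s,a+m)$ with probability $a/(s+a)$. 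By Theorem~\ref{thm: martingale--validator} specialized to $\alpha=0$, the share $\omega_{j,t}$ is a bounded martingale, so $\omega_{j,\infty}$ exists almost surely and its law, which depends only on the initial state, is exactly $\mathbf{F}_j(s_{j,0},S_0-s_{j,0})$. Conditioning on the outcome of the first step and using that the post-step limiting law is $\mathbf{F}_j$ evaluated at the new state yields \eqref{eq: limiting dist characterization}.

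Next I would verify the three side conditions for $\mathbf{F}_j$. Conditions~(1) and~(2) are immediate, since every reward is strictly positive: the axes $\{s=0\}$ and $\{a=0\}$ are invariant, so $s_{j,0}=0$ forces $\omega_{j,t}\equiv 0$ and $S_0-s_{j,0}=0$ forces $\omega_{j,t}\equiv 1$. Condition~(3) is the only quantitative point. I would bound the martingale increments by $|\omega_{j,t+1}-\omega_{j,t}|\le R_{t+1}/S_{t+1}\le R_{\max}/\left(S_0+(t+1)R_{\min}\right)$ and, using orthogonality of martingale increments, obtain
\[
\mathbf{E}\bigl[(\omega_{j,\infty}-\omega_{j,0})^2\bigr]=\sum_{t\ge 0}\mathbf{E}\bigl[(\omega_{j,t+1}-\omega_{j,t})^2\bigr]\le \sum_{t\ge 0}\frac{R_{\max}^2}{(S_0+tR_{\min})^2}=\mathcal{O}(1/S_0).
\]
Hence $\omega_{j,\infty}\to\omega_{j,0}$ in $L^2$, and therefore in distribution, as $S_0\to\infty$ with $\omega_{j,0}$ fixed, which is precisely $d_w\bigl(\mathbf{F}_j(s_{j,0},S_0-s_{j,0}),\delta_{\omega_{j,0}}\bigr)\to 0$.

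For uniqueness, suppose $\mathscr{G}_1,\mathscr{G}_2$ are continuous solutions of \eqref{eq: limiting dist characterization} satisfying (1)--(3), and set $D(s,a)\coloneqq d_w\bigl(\mathscr{G}_1(s,a),\mathscr{G}_2(s,a)\bigr)$. The key structural fact is that $d_w$, being the $L^1$ distance between cumulative distribution functions, is jointly convex under mixtures; applying this to the two mixture representations of $\mathscr{G}_1(s,a)$ and $\mathscr{G}_2(s,a)$ provided by \eqref{eq: limiting dist characterization} turns the linear homogeneous equation for the difference into the one-step inequality
\[
D(s,a)\le \frac{s}{s+a}\int_{\mathcal{K}} D(s+m,a)\,u(m)\,dm+\frac{a}{s+a}\int_{\mathcal{K}} D(s,a+m)\,u(m)\,dm .
\]
The right-hand side is exactly $\mathbf{E}_{(s,a)}[D(\text{state after one step})]$ for the urn chain, so $D$ evaluated along the chain is a submartingale; unrolling $n$ steps gives $D(s,a)\le \mathbf{E}_{(s,a)}[D(\text{state at time }n)]$. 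Because each reward is at least $R_{\min}>0$, the total stake at time $n$ is deterministically at least $S_0+nR_{\min}$, which exceeds the constant $C(\epsilon)$ of condition~(3) once $n$ is large. At any state $(s',a')$ with $s'+a'>C(\epsilon)$, applying condition~(3) to each $\mathscr{G}_i$ and using the triangle inequality through the common point mass $\delta_{s'/(s'+a')}$ gives $D(s',a')<2\epsilon$; hence $D(s,a)\le 2\epsilon$ for every $(s,a)\in\mathcal{S}$, and letting $\epsilon\downarrow 0$ forces $D\equiv 0$, i.e.\ $\mathscr{G}_1=\mathscr{G}_2=\mathbf{F}_j$.

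The main obstacle is the uniqueness step, and within it the two observations that make everything collapse: first, that the mixture structure of \eqref{eq: limiting dist characterization} together with the convexity of $d_w$ yields a genuine submartingale bound (rather than merely a contraction confined to a fixed region); and second, that the strict positivity $R_{\min}>0$ makes the total stake diverge \emph{deterministically}, so condition~(3) can be invoked uniformly along every trajectory. Establishing \eqref{eq: limiting dist characterization} and conditions~(1)--(2) is routine once the urn reformulation is in place, and the concentration estimate behind~(3) is a short second-moment computation; the only remaining care is the measurability and continuity needed to integrate $D$ against $u$, which follow from the assumed continuity of the $\mathscr{G}_i$ and of $d_w$.
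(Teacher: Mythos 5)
Your proposal is correct and follows essentially the same route as the paper: the paper's proof consists of recasting the $\alpha=0$ dynamics as a two-color P\'olya urn with random reinforcements and then deferring entirely to the cited reference of Aletti--May--Secchi, whose argument (one-step Markov decomposition for the functional equation, a second-moment bound for condition (3), and uniqueness via convexity of $d_w$ plus the deterministic divergence of the total stake) is precisely what you have reconstructed. Your write-up is in fact more self-contained than the paper's, but it introduces no new idea beyond the urn reformulation the paper already uses.
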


\begin{proof}
    When $\alpha = 0$, the stake share evolution of each validator $j$ with initial stake $s_{j,0}$ can be modeled as a Pólya urn model with random replacements. Specifically, we consider an urn containing two types of balls, black and white, where the initial number of black balls is $s_{j,0}$ and the initial number of white balls is $S_0 - s_{j,0}$. The remainder of the proof follows~\citet{giacomo2007urn}. 
\end{proof}

\quad Refer to Section~\ref{sec: simulation for validators' shares} for the numerical results on the distribution of $\omega_{j, \infty}$. Comparing Figure~\ref{fig: final stake share dist a>0} and Figure~\ref{fig: final stake share dist a=0}, we observe that 
when $\gamma > 0$,
the distribution of $\omega_{j, \infty}$ appears to be very similar for
$\alpha = 0$ and $\alpha \ne 0$.  


\quad When $\gamma = 0$, we can compute explicitly the variance of $\omega_{j, \infty}$, and study its stability. 
The following theorem characterizes its asymptotics.


\begin{theorem} \label{thm: stability of share}
    Let $R_{\min}$ and $R_{\max}$ denote the minimum and maximum values, respectively, of the i.i.d random variable $R_t$. For $s_{j,0} = f\left(S_0\right)$ such that $f\left(S_0\right) \to \infty$ as $S_0 \to \infty$, we have for each $\epsilon > 0$ and each $t \geq 1$ or $t = \infty$:

    \begin{equation}
        \mathbf{P} \left( \left\lvert \frac{\omega_{j,t}}{\omega_{j,0}} - 1\right\lvert > \epsilon \right) \leq \frac{R_{\max}^2}{\left( R_{\min} - \alpha\right)\epsilon^2 f\left( S_0\right)}, \label{ineq: limiting share cvg to initial share}
    \end{equation}
    which converges to $0$, as $S_0 \to \infty$.
\end{theorem}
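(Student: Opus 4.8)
The plan is to apply Chebyshev's inequality to reduce the claim to a variance estimate, then exploit the martingale structure of $(\omega_{j,t})_{t\ge 0}$ established in Theorem~\ref{thm: martingale--validator}. Since $\mathbf{E}(\omega_{j,t}) = \omega_{j,0}$ and $\omega_{j,0} = f(S_0)/S_0 > 0$, the event rewrites as $\{|\omega_{j,t}-\omega_{j,0}| > \epsilon\,\omega_{j,0}\}$, so Chebyshev gives
\[
\mathbf{P}\left(\left|\frac{\omega_{j,t}}{\omega_{j,0}} - 1\right| > \epsilon\right) \le \frac{\var(\omega_{j,t})}{\epsilon^2\,\omega_{j,0}^2}.
\]
Everything then reduces to bounding $\var(\omega_{j,t})$, and since $\omega_{j,0} = f(S_0)/S_0$, I expect a variance bound of order $\omega_{j,0}/S_0$ to collapse the right-hand side to exactly the stated expression.

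To estimate the variance I would use orthogonality of martingale increments: writing $\omega_{j,t}-\omega_{j,0} = \sum_{k=0}^{t-1}(\omega_{j,k+1}-\omega_{j,k})$, the cross terms vanish upon conditioning on $\mathcal{F}_k$, so that
\[
\var(\omega_{j,t}) = \sum_{k=0}^{t-1}\mathbf{E}\left[(\omega_{j,k+1}-\omega_{j,k})^2\right].
\]
Specializing the recursion \eqref{eq:wjtx} to $\gamma=0$ (hence $S_{k+1} = S_k - \alpha + R_{k+1}$), the increment collapses neatly to
\[
\omega_{j,k+1}-\omega_{j,k} = \frac{R_{k+1}}{S_{k+1}}\left(\mathbf{1}(X_{j,k+1}) - \omega_{j,k}\right),
\]
because $\tfrac{S_k-\alpha}{S_{k+1}} - 1 = -\tfrac{R_{k+1}}{S_{k+1}}$. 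Conditioning on $\mathcal{F}_k$ and invoking Assumption~\ref{assump: independence of validator selection} (so that $X_{j,k+1}$, a Bernoulli$(\omega_{j,k})$ variable given $\mathcal{F}_k$, is independent of $R_{k+1}$ and hence of $S_{k+1}$), the Bernoulli factor contributes its variance $\omega_{j,k}(1-\omega_{j,k})$ and factorizes from the reward term:
\[
\mathbf{E}\left[(\omega_{j,k+1}-\omega_{j,k})^2 \mid \mathcal{F}_k\right] = \omega_{j,k}(1-\omega_{j,k})\,\mathbf{E}\left[\frac{R_{k+1}^2}{S_{k+1}^2}\,\middle|\,\mathcal{F}_k\right].
\]

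Next I would bound each factor crudely. Using $R_{k+1}\le R_{\max}$, the deterministic lower bound $S_k \ge S_0 + k(R_{\min}-\alpha)$ (proved by induction exactly as in Lemma~\ref{lemma: S_t increasing}, with $R_{\min} > \alpha$ guaranteed by Assumption~\ref{assump: bound on alpha}), and $\omega_{j,k}(1-\omega_{j,k}) \le \omega_{j,k}$ together with $\mathbf{E}(\omega_{j,k}) = \omega_{j,0}$, I obtain
\[
\mathbf{E}\left[(\omega_{j,k+1}-\omega_{j,k})^2\right] \le \frac{R_{\max}^2\,\omega_{j,0}}{\left(S_0 + (k+1)(R_{\min}-\alpha)\right)^2}.
\]
Summing over $k$ and comparing the series to $\int_0^\infty (S_0 + xc)^{-2}\,dx = 1/(cS_0)$ with $c = R_{\min}-\alpha$ yields $\var(\omega_{j,t}) \le \omega_{j,0}R_{\max}^2/((R_{\min}-\alpha)S_0)$, uniformly in $t$. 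Substituting $\omega_{j,0} = f(S_0)/S_0$ into the Chebyshev bound collapses it to $R_{\max}^2/((R_{\min}-\alpha)\epsilon^2 f(S_0))$, and convergence to $0$ is immediate since $f(S_0)\to\infty$. The case $t=\infty$ then follows because $(\omega_{j,t})$ is a bounded martingale, so $\omega_{j,t}\to\omega_{j,\infty}$ in $L^2$ and the uniform-in-$t$ variance bound passes to the limit. The main obstacle is the conditional second-moment computation: one must carefully separate the randomness of the validator selection $X_{j,k+1}$ from that of the reward $R_{k+1}$ (on which $S_{k+1}$ itself depends), justifying the factorization through conditional independence given $\mathcal{F}_k$; once that decomposition is secured, the surviving estimates — the integral comparison and the $L^2$ limiting argument — are routine.
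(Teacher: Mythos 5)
Your proof is correct and follows essentially the same route as the paper: Chebyshev's inequality, the conditional variance identity $\omega_{j,k}(1-\omega_{j,k})\,\mathbf{E}_k\bigl[(R_{k+1}/S_{k+1})^2\bigr]$, the deterministic lower bound $S_k \ge S_0 + k(R_{\min}-\alpha)$, and the sum-to-integral comparison yielding $R_{\max}^2/\bigl(S_0(R_{\min}-\alpha)\bigr)$. The only cosmetic difference is that you sum orthogonal martingale increments directly where the paper packages the same telescoping into a recursion $a_{t+1}=a_t+z_{t+1}(1-a_t)$ for the normalized variance.
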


\quad The proof of Theorem~\ref{thm: stability of share} is given in Appendix~\ref{proof: thm-stability}. 
It implies that for large validators, i.e., those with initial stakes $s_{j,0} = f(S_0)$ such that $f(S_0) \to \infty$ as $S_0 \to \infty$, their shares remain stable over time. Specifically, their limiting share converges in probability to their initial share as the total initial stake $S_0 \to \infty$. 


\section{Numerical Results} \label{sec: numerical results}
\quad This section presents numerical results on the strategic interactions among builders and the evolution of validators' stake shares. Section~\ref{sec: simulation for three builders} reports numerical findings for the multi-player setting ($M \geq 3$) in the builders' game, and Section~\ref{sec: simulation for validators' shares} provides simulation results on the dynamics of validators' stake shares.

\subsection{Multi-Player Game among Builders \texorpdfstring{$\left(M \geq 3\right)$}{M larger than or equal to three}} \label{sec: simulation for three builders}
We conduct numerical experiments to analyze the equilibrium behavior in settings with more than two players. In this analysis, we consider two cases:
\begin{enumerate}
    \item Each player's ability to extract MEV without the order being auctioned in the OFA is $k_{1,i} \coloneqq \frac{\bar{v}_i}{\bar{f}_i}$ times the MEV obtained from the auctioned order, and this ratio $k_{1,i}$ remains constant across all three players, i.e., $k_{1,1} = k_{1,2} = k_{1,3}$.
    \item The ratio $k_{1,i}$ varies among the three players, such that $k_{1,1} < k_{1,2} < k_{1,3}$.
\end{enumerate}
The results for four players are presented in Table~\ref{table: 4 player with common ratio} and Table~\ref{table: 4 player with descending ratio} in Appendix~\ref{appendix: additional numerical results}.
\begin{table}[h] 
\centering
\begin{tabular}{ |c|c|c|c|c|c| } 
 \hline
 \rule{0pt}{10pt} $\bar{f}_i/\bar{v}_i$ & 2 & 3 & 5 & 8 & 10 \\ 
 \hline 
 $h_1$ & 138.55 & 184.96 &278.07 & 417.99 & 511.34\\ 
 $\mathbf{E} \left(\pi_1\right)$ &147.81 &203.40 &315.73 & 484.98& 597.97 \\
 \hline
 $h_2$& 64.31 & 89.42 & 139.38 & 214.14 & 263.93\\ 
 $\mathbf{E} \left(\pi_2\right)$ &19.79 & 30.01 &50.48 & 81.21& 101.70\\
 \hline
 $h_3$& 31.58 & 45.50 & 73.07 & 114.27& 141.69\\ 
 $\mathbf{E} \left(\pi_3\right)$ &4.37 &6.94 &12.13 & 19.95& 25.18 \\
 \hline 
\end{tabular}

\medskip
\caption{This table presents the equilibrium outcomes for three players under the assumption that each player $i \in \left[ 3 \right]$ has a common ratio $\bar{f}_i/\bar{v}_i$, where the ratio is set to $2,3,5,8$, or $10$. The players' $\bar{v}_i$ follow the ratio $\bar{v}_1: \bar{v}_2:\bar{v}_3 = 5:2:1$, with $\bar{v}_3 = 30$.}
\label{table: 3 player with common ratio}
\end{table}

\begin{table}[h]
\centering
\begin{tabular}{ |c|c|c|c|c|c| } 
 \hline
 $\bar{v}_3$ & 10 & 20 & 30 & 50 & 80 \\ 
 \hline 
 $h_1$ & 6099.31 & 12198.62 &18297.94 & 30496.56 & 48794.50\\ 
 $\mathbf{E} \left(\pi_1\right)$ &37850.41 &75700.83 &113551.24 & 189252.07& 302803.32 \\
 \hline
 $h_2$& 932.00 & 1864.01 & 2796.01 & 4660.01 & 7456.02\\ 
 $\mathbf{E} \left(\pi_2\right)$ &140.91 &281.82 &422.74 & 704.56& 1127.30 \\
 \hline
 $h_3$& 49.89 & 99.79 & 149.68 & 249.47& 399.15\\ 
  $\mathbf{E} \left(\pi_3\right)$ &0.35 &0.71 &1.06 & 1.77& 2.83 \\
 \hline
\end{tabular}

\medskip
\caption{This table presents the equilibrium outcomes for three players under the assumption that $\frac{\bar{f}_1}{\bar{v}_1} = 1000 > \frac{\bar{f}_2}{\bar{v}_2} = 100 > \frac{\bar{f}_3}{\bar{v}_3} = 10 $. The players' $\bar{v}_i$ follow the ratio $\bar{v}_1: \bar{v}_2:\bar{v}_3 = 5:2:1$, where $\bar{v}_3$ is set to $10, 20, 30, 50$, or $80$.}
\label{table: 3 player with descending ratio}
\end{table}

\quad From Table~\ref{table: 3 player with common ratio}, we observe that, in most cases, the results align with the trend in Proposition~\ref{prop: 2 player NE ratios}: the more capable players pay relatively less. However, an exception occurs when $\bar{f}_i/\bar{v}_i = 2$ for $i=1,2,3$, $\bar{f}_2/\bar{f}_3 = \bar{v}_2/\bar{v}_3 = 2$ and $h_2/h_3 > 2$. This suggests that the presence of player $3$ may exert competitive pressure on player $2$, leading to a higher payment. Furthermore, the most capable player appears to gain a disproportionately higher expected utility, while the gap in expected utility between the second and least capable players narrows.

\quad From Table~\ref{table: 3 player with descending ratio}, we observe that when every player in the game is scaled by $k$ times, then their equilibrium solutions and expected utilities are also scaled by $k$ times. The effect observed in Table~\ref{table: 3 player with common ratio} is further amplified here. More capable players, such as player 1, attain disproportionately high expected utility, significantly exceeding $\frac{\bar{f}_1 \bar{v}_1}{\bar{f}_2 \bar{v}_2}$.

\subsection{Simulation of Validators' Stake Shares} \label{sec: simulation for validators' shares}

We simulate the evolution of total stake in the PoS system,
and the distribution of validators' stake shares. 
The parameters are set as follows: the number of validators is fixed at $N = 3$, with a time horizon of $T=1000$ steps. The initial stakes of the three validators are given by $s_{1,0} = 10, s_{2,0} = 20$, and $s_{3,0} = 30$, respectively. There are two builders, with bids set at $h_1 = 15$ and $h_2 = 20$. Additionally, we set $\mu = 0.7$, $\alpha = 8$, and $\beta_v = 11$, while $\gamma$ varies over the values $0, 0.1, 0.2,0.3$, and $1.5$. We run the simulation $1000$ times and plot the mean total stake and mean stake shares over time in Figure~\ref{fig: evolution of stakes}. Additionally, we repeat the simulation $10000$ times and plot the distribution of the final stake share, $\omega_{j,T}$, for $j = 1,2,3$, in Figure~\ref{fig: final stake share dist a>0}. We then set $\alpha = 0$ and plot the distribution of $\omega_{j,T}$ in Figure~\ref{fig: final stake share dist a=0}.


\begin{figure}[h]
    \centering
    \begin{minipage}{0.5\textwidth}
        \centering
        \includegraphics[width=\linewidth]{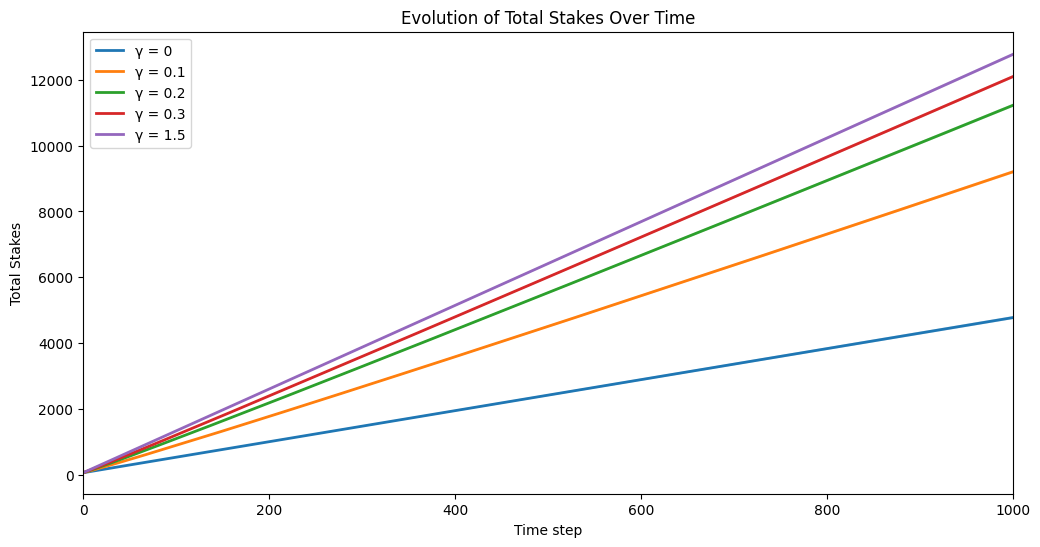}
        \small (a)
    \end{minipage}%
    \begin{minipage}{0.5\textwidth}
        \centering
        \includegraphics[width=\linewidth]{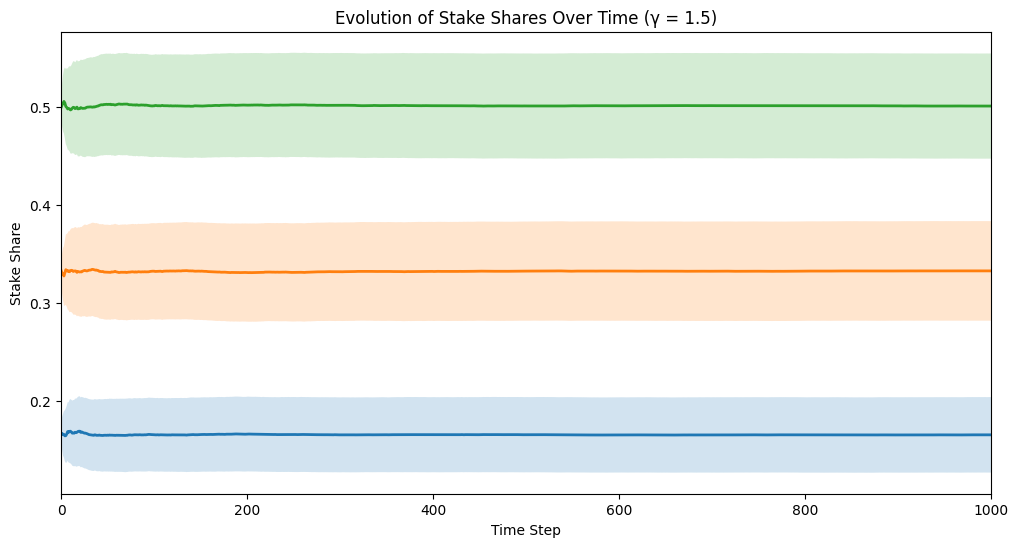}
        \small (b)
    \end{minipage}%
    
    \caption{\small (a) Evolution of mean total stakes over time when $\gamma = 0, 0.1, 0.2, 0.3$, and $1.5$. (b) Evolution of mean stake shares over time when $\gamma = 1.5$. The shaded region represents $\text{mean} \pm \frac{1}{4} \text{standard deviation}$.} 
    \label{fig: evolution of stakes}
\end{figure}

\begin{figure}[h]
    \centering  \includegraphics[width=\linewidth]{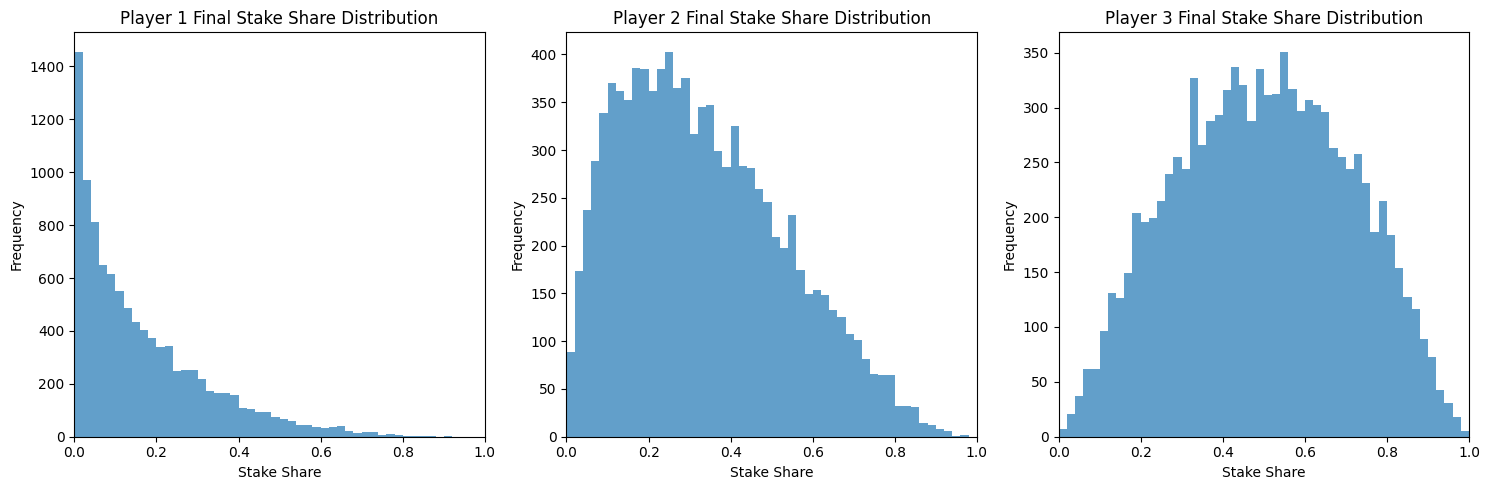}
    \caption{Distribution of final stake shares ($\alpha \neq 0, \gamma = 1.5$)} 
    \label{fig: final stake share dist a>0}
\end{figure}

\begin{figure}[h]
    \centering  \includegraphics[width=\linewidth]{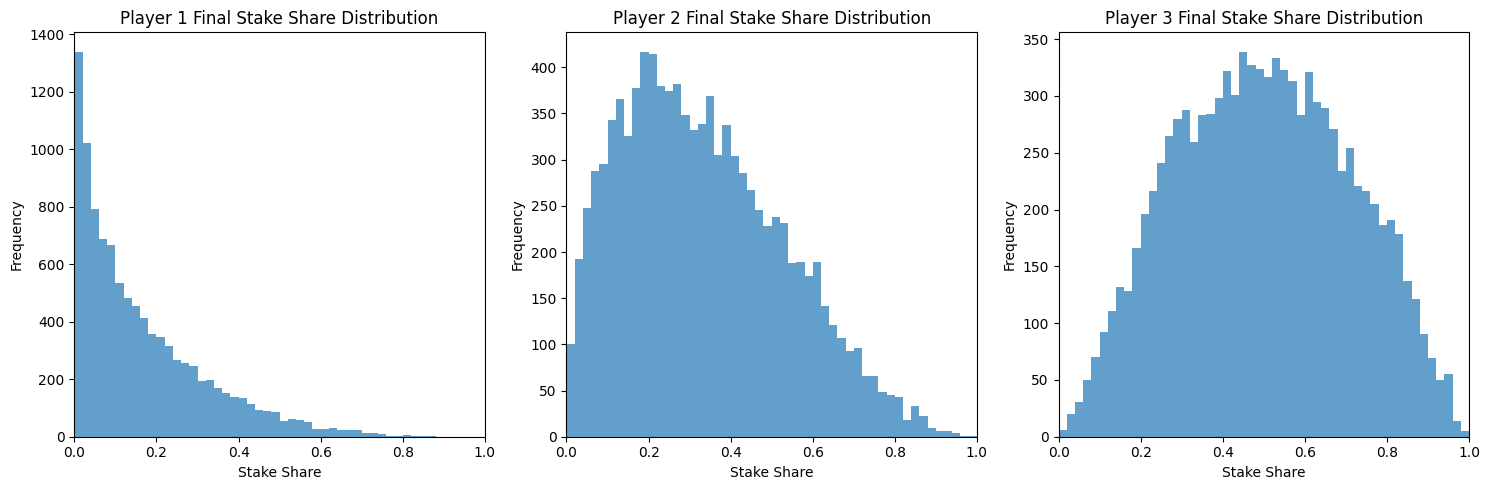}
    \caption{Distribution of final stake shares ($\alpha = 0, \gamma = 1.5$)} 
    \label{fig: final stake share dist a=0}
\end{figure}


\quad From Figure~\ref{fig: evolution of stakes}, we observe that the total stake grows at a rate of $R$ when $\gamma > 0$ and at a rate of $R - \alpha$ when $\gamma = 1.5$. The slight variations in growth rates for $\gamma = 0.1, 0.2, 0.3$ in the plot arise from the fact that $1/S_t^\gamma$ decays at different rates as $S_t \to \infty$. Extending the time horizon would further demonstrate that the growth rate converges to $R-\alpha$. 

\quad For the evolution of stake shares, Figure~\ref{fig: evolution of stakes} shows that they remain stable over time, consistent with their martingale property. Since the stake share dynamics exhibit the same behavior across different values of $\gamma$, we present only a representative case in the figure.

\quad Figure~\ref{fig: final stake share dist a>0} and Figure~\ref{fig: final stake share dist a=0} illustrate the marginal distribution of $\omega_{j,T}$ for $j=1,2,3$ in the cases where $\alpha > 0$ and $\alpha = 0$, respectively. Each player begins with initial stakes $10, 20$, and $30$, corresponding to initial stake shares of $\omega_{1,0} = \frac{1}{6}, \omega_{2,0} = \frac{1}{3}$, and $\omega_{3,0} = \frac{1}{2}$. The distributions in both cases appear highly similar, with their marginal distributions skewed toward their initial stake shares.

\quad When increasing the total initial stake while maintaining the initial relative shares, we observe that the limiting stake shares indeed converge to their initial values, aligning with the theoretical results established in Theorem~\ref{thm: stability of share}.

\quad However, when $\gamma=0$, we observe that under high consumption costs and small initial stakes, some validators' stake shares are likely to be driven to zero, as shown in Figures~\ref{fig: final stake share dist a=8 r=0} and~\ref{fig: 3d hist of final stake dist a = 8 r = 0} in Appendix~\ref{appendix: additional numerical results}. This phenomenon may be attributed to the fact that costs impose a greater relative burden on smaller validators, whereas larger validators experience only a limited impact. Specifically, since the term $\alpha s_{j,0}/S_0$ is bounded, larger validators are more resilient to costs, while smaller validators are more vulnerable and face greater challenges in accumulating stakes over time.

\section{Conclusions}
\label{concl}
\quad In this study, we examine the interactions between the order flow auction and the block-building auction, formulating the problem as a multiplayer game. We establish the existence of a Nash equilibrium in the general setting and prove its uniqueness in the two-player case, along with deriving the closed-form solution. Furthermore, we show that validators' stake shares follow a martingale process. We also conduct numerical simulations to complement our theoretical analysis. Both theoretical and numerical results suggest a tendency toward centralization in the builder space, which, in contrast, is highly unlikely in the validator space.


\bigskip
{\bf Acknowledgement}:
Ruofei Ma is supported by the Columbia Innovation Hub grant and the Center for Digital Finance and Technologies (CDFT) grant.
Wenpin Tang acknowledges financial support by NSF grant DMS-2206038, and the Tang Family Assistant Professorship.
The works of Ruofei Ma and David Yao are part of a Columbia-CityU/HK collaborative project that is supported by InnoHK Initiative, The Government of the HKSAR and the AIFT Lab.

\bibliographystyle{abbrvnat}
\bibliography{Reachablity}

\newpage

\appendix 

\begin{center}
 \large \textbf{Appendix} 
\end{center}

\section{Code for Problem~\ref{optimization problem (solutions)}}  \label{appendix: codes}
\lstset{
    language=Python,           basicstyle=\ttfamily\footnotesize,
    keywordstyle=\color{blue},
    commentstyle=\color{green!50!black},
    stringstyle=\color{red},
    breaklines=true,
    numbers=left,
    numberstyle=\tiny,
    frame=single,
    captionpos=b
}

\begin{lstlisting}[caption={}, label={}]
import numpy as np
from scipy.optimize import minimize

def func(vars):
    f1, f2, v1, v2 = vars
    try:
        p = -((11*f1**2 + 12*v1**2 + 4*f1*(4*f2 + v1 + 8*v2))/(8*f1**2))
        q = (3*f1**3 + 8*v1**3 + 4*f1*v1*(4*f2 + v1 + 8*v2) + f1**2*(-10*v1 + 32*v2))/(8*f1**3)
        delta0 = (2*f1-2*f2+4*v1-4*v2)**2+3*(3*f1+2*v1)*(3*f2+2*v2)-12*f1*f2
        delta1 = -27*f2*(3*f1 + 2*v1)**2 + 72*f1*f2*(2*f1 - 2*f2 + 4*v1 - 4*v2) + 2*(2*f1 - 2*f2 + 4*v1 - 4*v2)**3 - \
                  9*(3*f1 + 2*v1)*(2*f1 - 2*f2 + 4*v1 - 4*v2)*(-3*f2 - 2*v2) + 27*f1*(-3*f2 - 2*v2)**2

        phi = np.arccos(delta1/(2*np.sqrt(delta0**3)))

        S = 1/2*np.sqrt(-2*p/3 + 2/(3*f1)*np.sqrt(delta0)*np.cos(phi/3))

        lmda = -(3*f1+2*v1)/(4*f1)+S+1/2*np.sqrt(-4*S**2-2*p-q/S)
        h1_1 = lmda*(f1*lmda+f1+2*v1)/((1+2*lmda)*(1+lmda))
        h1_2 = (f2 + (f2+2*v2)*lmda)/(lmda*(2+lmda)*(1+lmda))

        return (8*S**3-q)
    except:
        return np.inf  

initial_guess = [1, 1, 1, 1]  

# Variable bounds (all variables > 0)
bounds = [(1e-5, None), (1e-5, None), (1e-5, None), (1e-5, None)]  

# Define the constraints f1 >= v1 and f2 >= v2
constraints = [
    {"type": "ineq", "fun": lambda vars: vars[0] - vars[2]},  # f1 - v1 >= 0
    {"type": "ineq", "fun": lambda vars: vars[1] - vars[3]},  # f2 - v2 >= 0
]

result = minimize(func, initial_guess, method='SLSQP', bounds=bounds, constraints=constraints)

# Results
if result.success:
    print("Global minimum found:")
    print("Function value:", result.fun)
    print("At variables (f1, f2, v1, v2):", result.x)
else:
    print("Optimization failed:", result.message)

\end{lstlisting}

\section{Proof of Theorem~\ref{thm: stability of share}} \label{proof: thm-stability}
\quad To prove Theorem~\ref{thm: stability of share}, we need a series of lemmas.

\begin{lemma} \label{lemma: conditional var of share}
    Let $\text{Var}_t \left( \,\cdot\, \right)$ and $\mathbf{E}_t \left( \,\cdot\, \right) $ denote the conditional variance and conditional expectation at time $t$, respectively, i.e., $\text{Var}_t \left( \,\cdot\, \right) = \text{Var}\left( \,\cdot\, \right \rvert \mathcal{F}_t)$, $\mathbf{E}_t \left( \,\cdot\, \right) = \mathbf{E}\left( \,\cdot\, \right \rvert \mathcal{F}_t)$. When $\gamma = 0$, the conditional variance at time $t$ of validator $j$'s share at time $t+1$ is:
    \begin{equation*}
        \text{Var}_t \left( \omega_{j, t+1} \right) =  \omega_{j,t} \left( 1 - \omega_{j,t}\right) \mathbf{E}_t \left[\left( \frac{R_{t+1}}{S_{t+1}}\right)^2 \right].
    \end{equation*}
\end{lemma}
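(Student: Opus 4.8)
The plan is to start from the one-step update for the stake share and isolate the two conditionally independent sources of randomness. Specializing Eq.~\eqref{eq:wjtx} to $\gamma = 0$ and writing $B := \mathbf{1}(X_{j,t+1})$ for the selection indicator, we have
$$\omega_{j,t+1} = \frac{\omega_{j,t}(S_t - \alpha) + R_{t+1}\,B}{S_{t+1}}, \qquad S_{t+1} = S_t - \alpha + R_{t+1}.$$
Conditioning on $\mathcal{F}_t$, the quantities $\omega_{j,t}$ and $S_t$ are known, while the randomness is carried by the pair $(R_{t+1}, B)$. By Assumption~\ref{assump: independence of validator selection}, $B$ is independent of $R_{t+1}$ given $\mathcal{F}_t$, and $B \sim \mathrm{Bernoulli}(\omega_{j,t})$ with $\mathbf{E}_t[B] = \omega_{j,t}$.

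The main obstacle is that the denominator $S_{t+1}$ is itself a function of $R_{t+1}$, so a direct variance computation entangles the two random inputs awkwardly. The key device to sidestep this is to condition first on $R_{t+1}$, which freezes the denominator and leaves $B$ as the only remaining source of randomness. Applying the law of total variance with the inner conditioning on $R_{t+1}$,
$$\text{Var}_t(\omega_{j,t+1}) = \mathbf{E}_t\!\left[\text{Var}\!\left(\omega_{j,t+1} \mid \mathcal{F}_t, R_{t+1}\right)\right] + \text{Var}_t\!\left(\mathbf{E}\!\left[\omega_{j,t+1} \mid \mathcal{F}_t, R_{t+1}\right]\right).$$

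I would then show the second (between-group) term vanishes. Given $\mathcal{F}_t$ and $R_{t+1} = m$, the denominator $S_{t+1} = S_t - \alpha + m$ is constant, so
$$\mathbf{E}\!\left[\omega_{j,t+1} \mid \mathcal{F}_t, R_{t+1} = m\right] = \frac{\omega_{j,t}(S_t - \alpha) + m\,\omega_{j,t}}{S_t - \alpha + m} = \omega_{j,t},$$
which is independent of $m$; hence its conditional variance in the second term is zero (this incidentally re-derives the martingale property of Theorem~\ref{thm: martingale--validator}). For the first (within-group) term, conditionally on $R_{t+1} = m$ the only random part of $\omega_{j,t+1}$ is the term $\frac{m}{S_{t+1}} B$ that is linear in $B$, so
$$\text{Var}\!\left(\omega_{j,t+1} \mid \mathcal{F}_t, R_{t+1} = m\right) = \left(\frac{m}{S_{t+1}}\right)^{2} \text{Var}(B) = \left(\frac{m}{S_{t+1}}\right)^{2}\omega_{j,t}(1-\omega_{j,t}).$$
Taking the outer expectation over $R_{t+1}$ and pulling the $\mathcal{F}_t$-measurable factor $\omega_{j,t}(1-\omega_{j,t})$ out of $\mathbf{E}_t[\cdot]$ yields the claimed identity $\text{Var}_t(\omega_{j,t+1}) = \omega_{j,t}(1-\omega_{j,t})\,\mathbf{E}_t\big[(R_{t+1}/S_{t+1})^2\big]$.
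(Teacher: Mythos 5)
Your proof is correct, and it takes a genuinely different route from the paper's. The paper computes the conditional second moment directly: it expands
\begin{equation*}
\omega_{j,t+1}^2=\omega_{j,t}^2\Bigl(1-\tfrac{R_{t+1}}{S_{t+1}}\Bigr)^2+\Bigl(\tfrac{R_{t+1}}{S_{t+1}}\Bigr)^2\mathbf{1}\left(X_{j,t+1}\right)+2\,\omega_{j,t}\,\mathbf{1}\left(X_{j,t+1}\right)\tfrac{R_{t+1}}{S_{t+1}}\Bigl(1-\tfrac{R_{t+1}}{S_{t+1}}\Bigr),
\end{equation*}
takes $\mathbf{E}_t$ using $\mathbf{1}(X_{j,t+1})^2=\mathbf{1}(X_{j,t+1})$ and Assumption~\ref{assump: independence of validator selection}, watches the cross terms collapse to $\omega_{j,t}^2+\omega_{j,t}(1-\omega_{j,t})\mathbf{E}_t[(R_{t+1}/S_{t+1})^2]$, and then subtracts $(\mathbf{E}_t[\omega_{j,t+1}])^2=\omega_{j,t}^2$ via the martingale property. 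You instead apply the law of total variance with an inner conditioning on $R_{t+1}$, which freezes the random denominator $S_{t+1}$ and reduces the within-group term to a scaled Bernoulli variance, while the between-group term vanishes because the inner conditional mean is identically $\omega_{j,t}$. Both arguments rest on exactly the same two ingredients --- the conditional independence of the selection indicator from $R_{t+1}$ and $\mathbf{E}_t[\mathbf{1}(X_{j,t+1})]=\omega_{j,t}$ --- and are of comparable length; your decomposition is arguably more transparent, since it explains structurally why the answer is a Bernoulli variance $\omega_{j,t}(1-\omega_{j,t})$ weighted by $\mathbf{E}_t[(R_{t+1}/S_{t+1})^2]$ and why no cross term survives, whereas the paper's version relies on an algebraic cancellation that the reader must verify by hand. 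One small point of care: Assumption~\ref{assump: independence of validator selection} as stated gives unconditional independence of $X_{j,t+1}$ from the other randomness, but the selection probability is the $\mathcal{F}_t$-measurable quantity $\omega_{j,t}$; the precise fact you need (and that the paper also implicitly uses) is that, conditionally on $\mathcal{F}_t$, the indicator is $\mathrm{Bernoulli}(\omega_{j,t})$ and independent of $R_{t+1}$, which is the intended reading of that assumption.
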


\begin{proof}
When $\gamma = 0$,
\begin{equation*}
    \omega_{j,t+1} = \omega_{j,t} \frac{S_t - \alpha}{S_{t+1}} + \frac{R_{t+1}}{S_{t+1}} \mathbf{1} \left( X_{j,t+1}\right).
\end{equation*}
Noting that $\frac{S_t - \alpha}{S_{t+1}} = 1-\frac{R_{t+1}}{S_{t+1}}$, we have
\begin{align*}
    \mathbf{E}_t \left( \omega_{j, t+1}^2 \right) &= \mathbf{E}_t \left[ \omega_{j,t}^2 \left(1-\frac{R_{t+1}}{S_{t+1}}\right)^2 + \left(\frac{R_{t+1}}{S_{t+1}}\right)^2 \mathbf{1} \left( X_{j,t+1}\right) + 2 \omega_{j,t} \mathbf{1} \left( X_{j,t+1}\right) \frac{R_{t+1}}{S_{t+1}} \left(1-\frac{R_{t+1}}{S_{t+1}}\right)
    \right],\\ &=\omega_{j,t}^2 + \omega_{j,t} \left( 1-\omega_{j,t} \right) \mathbf{E}_t \left[ \left( \frac{R_{t+1}}{S_{t+1}} \right)^2 \right].
\end{align*}
where the last equality is obtained by Assumption~\ref{assump: independence of validator selection}.
By Theorem~\ref{thm: martingale--validator}, $\left(\omega_{j,t}\right)_{t \geq 0}$ is a martingale:
\begin{equation*}
    \mathbf{E}_t \left( \omega_{j,t+1} \right) = \omega_{j,t}.
\end{equation*}
Therefore, the conditional variance of $\omega_{j,t+1}$ is given by
\begin{align*}
    \text{Var}_t \left( \omega_{j, t+1} \right) &= \mathbf{E}_t \left( \omega_{j, t+1}^2 \right)-\left(\mathbf{E}_t \left( \omega_{j,t+1} \right)\right)^2, \\
    &=\omega_{j,t} \left( 1 - \omega_{j,t}\right) \mathbf{E}_t \left[\left( \frac{R_{t+1}}{S_{t+1}}\right)^2 \right].
\end{align*}
\end{proof}

\begin{lemma} \label{lemma: unconditional var of share}
    When $\gamma = 0$, the unconditional variance of validator $j$'s share at time $t$ is given by
    \begin{equation}
        \text{Var} \left( \omega_{j,t} \right) = a_t \omega_{j,0} \left( 1-\omega_{j,0}\right), \label{eq: unconditional var of share}
    \end{equation}
where the sequence $a_t$ satisfies
\begin{equation}
    a_1 = z_1, \quad a_{t+1} = a_t + z_{t+1} \left( 1- a_t \right), \label{eq: a_t}
\end{equation}
where
\begin{equation}
    z_{t+1} = \mathbf{E} \left[\left( \frac{R_{t+1}}{S_{t+1}}\right)^2\right]. \label{eq: z_t}
\end{equation}
\end{lemma}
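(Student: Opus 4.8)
The plan is to prove the formula by induction on $t$, carrying along \emph{two} proportionality statements simultaneously: that $\text{Var}(\omega_{j,t}) = a_t\,\omega_{j,0}(1-\omega_{j,0})$ and that the ``dispersion'' $\mathbf{E}[\omega_{j,t}(1-\omega_{j,t})]$ equals $(1-a_t)\,\omega_{j,0}(1-\omega_{j,0})$. These two are linked through the martingale identity of Theorem~\ref{thm: martingale--validator}, so that verifying one at step $t$ automatically supplies the other, and the recursion $a_{t+1} = a_t + z_{t+1}(1-a_t)$ emerges from feeding the dispersion identity into Lemma~\ref{lemma: conditional var of share}.

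First I would set up the martingale-increment decomposition. For $\gamma = 0$, writing $\Delta_{t+1} := \omega_{j,t+1} - \omega_{j,t} = \frac{R_{t+1}}{S_{t+1}}\bigl(\mathbf{1}(X_{j,t+1}) - \omega_{j,t}\bigr)$, I note that $\mathbf{E}[\Delta_{t+1}\mid \mathcal{F}_t] = 0$ by Theorem~\ref{thm: martingale--validator}, so $\Delta_{t+1}$ is orthogonal to the $\mathcal{F}_t$-measurable random variable $\omega_{j,t}$. Hence the variance adds across the increment,
\begin{equation*}
\text{Var}(\omega_{j,t+1}) = \text{Var}(\omega_{j,t}) + \mathbf{E}[\Delta_{t+1}^2] = \text{Var}(\omega_{j,t}) + \mathbf{E}\bigl[\text{Var}_t(\omega_{j,t+1})\bigr],
\end{equation*}
and Lemma~\ref{lemma: conditional var of share} rewrites the last term as $\mathbf{E}\bigl[\omega_{j,t}(1-\omega_{j,t})\,\mathbf{E}_t[(R_{t+1}/S_{t+1})^2]\bigr]$.

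Next I would close the induction. Assuming $\text{Var}(\omega_{j,t}) = a_t\,\omega_{j,0}(1-\omega_{j,0})$ and using $\mathbf{E}[\omega_{j,t}] = \omega_{j,0}$ (hence $\mathbf{E}[\omega_{j,t}^2] = \text{Var}(\omega_{j,t}) + \omega_{j,0}^2$), a one-line computation gives the dispersion identity $\mathbf{E}[\omega_{j,t}(1-\omega_{j,t})] = \omega_{j,0}(1-\omega_{j,0}) - \text{Var}(\omega_{j,t}) = (1-a_t)\,\omega_{j,0}(1-\omega_{j,0})$. Substituting into the expression above and extracting the constant $z_{t+1}$ then yields $\text{Var}(\omega_{j,t+1}) = \bigl[a_t + z_{t+1}(1-a_t)\bigr]\omega_{j,0}(1-\omega_{j,0})$, which is exactly $a_{t+1}\,\omega_{j,0}(1-\omega_{j,0})$. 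The base case $t=1$ is clean and requires no approximation: since the first selection $X_{j,1}$ is independent of $R_1$ (Assumption~\ref{assump: independence of validator selection}), a direct second-moment computation gives $\text{Var}(\omega_{j,1}) = z_1\,\omega_{j,0}(1-\omega_{j,0}) = a_1\,\omega_{j,0}(1-\omega_{j,0})$.

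The hard part is the single algebraic move in the inductive step where I replace the \emph{conditional} quantity $\mathbf{E}_t[(R_{t+1}/S_{t+1})^2]$ by the \emph{unconditional} constant $z_{t+1} = \mathbf{E}[(R_{t+1}/S_{t+1})^2]$. By the tower property the term equals $\mathbf{E}\bigl[\omega_{j,t}(1-\omega_{j,t})\,(R_{t+1}/S_{t+1})^2\bigr]$, so pulling $z_{t+1}$ out is precisely the assertion that the share dispersion $\omega_{j,t}(1-\omega_{j,t})$ is uncorrelated with the reward ratio $(R_{t+1}/S_{t+1})^2$. This is delicate because $S_{t+1}$ depends on the reward path $R_1,\dots,R_{t+1}$, on which $\omega_{j,t}$ also depends, so the two factors are genuinely entangled once $t\ge 2$. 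The way I would expose exactly what is needed is to condition on the reward filtration $\mathcal{G}_t := \sigma(R_1,\dots,R_t)$: given $\mathcal{G}_t$ the totals $S_1,\dots,S_{t+1}$ are deterministic and $(\omega_{j,t})$ remains a conditional martingale, so the conditional-variance recursion factorizes cleanly with deterministic ratios $(R_k/S_k)^2$. Averaging over $\mathcal{G}_t$ then reduces the claim to the factorization $\mathbf{E}\bigl[\prod_{k\le t}(1-(R_k/S_k)^2)\bigr] = \prod_{k\le t}(1-z_k)$, which holds exactly when the successive reward ratios are uncorrelated — in particular under deterministic (or slowly varying) $S_t$. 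This is the step that demands the most care, and it is where the independence structure of the reward stream must be invoked explicitly; I would flag it as the crux and state precisely the condition under which the stated recursion is exact.
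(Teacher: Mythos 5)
Your proof is, in outline, the paper's proof: the same variance decomposition $\text{Var}(\omega_{j,t+1}) = \text{Var}(\omega_{j,t}) + \mathbf{E}[\text{Var}_t(\omega_{j,t+1})]$, the same appeal to Lemma~\ref{lemma: conditional var of share}, and the same substitution $\mathbf{E}[\omega_{j,t}(1-\omega_{j,t})] = (1-a_t)\,\omega_{j,0}(1-\omega_{j,0})$ to generate the recursion. The ``crux'' you flag, however, is a genuine issue, and the paper's own proof does not resolve it: in the inductive step the paper replaces $\mathbf{E}\{\omega_{j,t}(1-\omega_{j,t})\,\mathbf{E}_t[(R_{t+1}/S_{t+1})^2]\}$ by $\mathbf{E}[\omega_{j,t}(1-\omega_{j,t})]\cdot z_{t+1}$ without comment, even though $\mathbf{E}_t[(R_{t+1}/S_{t+1})^2]$ is a nondegenerate function of $S_t$ (hence of $R_1,\dots,R_t$) and $\omega_{j,t}$ depends on that same reward path, so the two factors are correlated for $t\ge 2$. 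Your conditioning on the reward path yields the correct exact expression $1-a_t=\mathbf{E}\bigl[\prod_{k\le t}\bigl(1-(R_k/S_k)^2\bigr)\bigr]$, which coincides with the $\prod_{k\le t}(1-z_k)$ implied by Eq.~\eqref{eq: a_t} only when the ratios $(R_k/S_k)^2$ decouple — for instance when $R_t$ is deterministic so that all $S_k$ are deterministic. With genuinely random $R_t$, as in this model, the stated recursion is an approximation rather than an identity, so you have not merely reproduced the argument; you have isolated its one unjustified step.

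Two smaller remarks. First, pairwise uncorrelatedness of the $(R_k/S_k)^2$ is not quite sufficient to split an expectation of a product of $t\ge 3$ factors; you need all mixed moments to factor, which your ``deterministic $S_t$'' sufficient condition does deliver. Second, the downstream damage is nil: the lemma is used only to obtain the bound $a_t\le\sum_{k\le t}z_k$ feeding into Theorem~\ref{thm: stability of share}, and that bound follows equally from the exact expression, since $1-\prod_{k\le t}(1-x_k)\le\sum_{k\le t}x_k$ pointwise and one may then take expectations term by term.
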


\begin{proof}
We prove by induction, following the approach of~\citet{rosu2021market}. Since $\mathbf{E}_t \left( \omega_{j,t+1} \right) = \omega_{j,t}$, the following equation holds for all $t \geq 0$:
    \begin{equation}
        \text{Var} \left( \omega_{j,t+1} \right) = \text{Var} \left( \omega_{j,t} \right) + \mathbf{E} \left(\text{Var}_t \left( \omega_{j, t+1} \right) \right). \label{eq: general formula for unconditional var of share}
    \end{equation}
From Lemma~\ref{lemma: conditional var of share}, we establish the base case at $t=1$:
\begin{align*}
    \text{Var} \left( \omega_{j,1} \right) &= \omega_{j,0} \left( 1- \omega_{j,0}\right) \mathbf{E} \left[\left( \frac{R_{1}}{S_{1}}\right)^2 \right], \\
    &= z_1 \omega_{j,0} \left( 1- \omega_{j,0}\right).
\end{align*}
Next, we assume that Eq.~(\ref{eq: unconditional var of share}) holds for all time steps up to $t = k$, where $k > 1$. At time $t=k+1$, we have 
\begin{align*}
    \text{Var} \left( \omega_{j,t+1} \right) &= \text{Var} \left( \omega_{j,t} \right) + \mathbf{E} \left\{ \omega_{j,t} \left( 1 - \omega_{j,t}\right) \mathbf{E}_t \left[\left( \frac{R_{t+1}}{S_{t+1}}\right)^2 \right] \right\},\\
\end{align*}
from Lemma~\ref{lemma: conditional var of share} and Eq.~(\ref{eq: general formula for unconditional var of share}). Since $\text{Var} \left( \omega_{j,t} \right) = a_t \omega_{j,0} \left( 1-\omega_{j,0}\right)$, we have
\begin{align*}
    \text{Var} \left( \omega_{j,t+1} \right) &= a_t \omega_{j,0} \left( 1 - \omega_{j,0} \right) + \left( \omega_{j,0} -a_t \omega_{j,0} \left( 1-\omega_{j,0} \right) - \omega_{j,0}^2 \right) \cdot \mathbf{E} \left[\left( \frac{R_{t+1}}{S_{t+1}}\right)^2 \right], \\
    &=\omega_{j,0} \left( 1-\omega_{j,0} \right) \left\{ a_t + \left( 1- a_t \right) \mathbf{E} \left[\left( \frac{R_{t+1}}{S_{t+1}}\right)^2\right] 
    \right\}.
\end{align*}
This completes the induction step and proves the lemma.
\end{proof}

\begin{lemma} \label{lemma: a_t in 0,1 and increasing}
    When $\gamma = 0$, let the sequence $\left\{ a_t\right\}_{t\geq 0}$ be defined as in Eqs.~(\ref{eq: a_t})–(\ref{eq: z_t}) and extend it by setting $a_0 = 0$. This sequence satisfies $a_t \in \left[ 0,1\right]$ and is non-decreasing, i.e., $a_{t+1} \ge a_t$, for all $t$.
\end{lemma}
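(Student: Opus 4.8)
The plan is to prove both assertions by a single induction on $t$, using as the key structural fact that the recursion~\eqref{eq: a_t} writes $a_{t+1}$ as a convex combination of $a_t$ and $1$. The one input that must be secured beforehand is that each weight $z_{t+1}$ lies strictly between $0$ and $1$; note that the extension $a_0 = 0$ is consistent with~\eqref{eq: a_t}, since then $a_1 = a_0 + z_1(1 - a_0) = z_1$, so the recursion holds already for $t = 0$.

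First I would pin down $z_{t+1} \in (0,1)$. Specializing~\eqref{eq:S_t} to $\gamma = 0$ gives $S_{t+1} = S_t - \alpha + R_{t+1}$, hence $S_{t+1} - R_{t+1} = S_t - \alpha$. Assumption~\ref{assump: bound on alpha} (with $\gamma = 0$, so $S_0^\gamma = 1$) yields $\alpha < S_0$, and Lemma~\ref{lemma: S_t increasing} gives $S_t \geq S_0 > \alpha$ for all $t$; therefore $S_{t+1} - R_{t+1} = S_t - \alpha > 0$. Combined with $R_{t+1} \geq R_{\min} > 0$ almost surely, this shows $0 < R_{t+1} < S_{t+1}$ almost surely, so $0 < (R_{t+1}/S_{t+1})^2 < 1$, and upon taking expectations $z_{t+1} = \mathbf{E}[(R_{t+1}/S_{t+1})^2] \in (0,1)$.

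With this in hand, I would rewrite~\eqref{eq: a_t} in the equivalent form
\[
a_{t+1} = (1 - z_{t+1})\, a_t + z_{t+1}\cdot 1,
\]
exhibiting $a_{t+1}$ as a convex combination with strictly positive weights. The boundedness $a_t \in [0,1]$ then follows by induction from the base case $a_0 = 0$: a convex combination of $a_t \in [0,1]$ and $1$ remains in $[0,1]$. Monotonicity is read off directly from the original form, $a_{t+1} - a_t = z_{t+1}(1 - a_t) \geq 0$, since $z_{t+1} > 0$ and $a_t \leq 1$ by the boundedness just established.

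The only genuinely nontrivial step is establishing the strict bound $R_{t+1} < S_{t+1}$ (equivalently $z_{t+1} < 1$), which is exactly what Assumption~\ref{assump: bound on alpha} together with the monotonicity of $S_t$ from Lemma~\ref{lemma: S_t increasing} is designed to guarantee. Once $z_{t+1} \in (0,1)$ is secured, the remaining boundedness and monotonicity claims are a routine convexity argument, so the main obstacle lies entirely in the preliminary estimate on $z_{t+1}$.
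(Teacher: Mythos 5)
Your proposal is correct and follows essentially the same route as the paper: both arguments reduce the lemma to showing $z_{t+1}=\mathbf{E}[(R_{t+1}/S_{t+1})^2]\in[0,1]$ via Assumption~\ref{assump: bound on alpha} (the paper unrolls $S_{t+1}=S_0+\sum_{i\le t+1}R_i-(t+1)\alpha$ and uses $\alpha<S_0$, $\alpha<R_{\min}$ directly, while you invoke Lemma~\ref{lemma: S_t increasing} to get $S_t-\alpha>0$), and then conclude by induction on the recursion $a_{t+1}=a_t+z_{t+1}(1-a_t)$. Your convex-combination rephrasing is a cosmetic difference only.
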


\begin{proof}
We prove by induction. When $\gamma = 0$, the sequence $\left\{ S_t \right\}$ satisfies the recurrence relation
\begin{equation*}
    S_{t+1} = S_t + R_{t+1} - \alpha,
\end{equation*}
as given by Eq.~(\ref{eq:S_t}).
For the base case $t=1$, by Eqs.~(\ref{eq: a_t})–(\ref{eq: z_t}), we have 
\begin{equation*}
    a_1 = \mathbf{E} \left[ \left(  \frac{R_1}{S_1}\right)^2 \right]= \mathbf{E} \left[ \left(  \frac{R_1}{S_0 + R_1 - \alpha}\right)^2 \right] \in \left[0,1 \right],
\end{equation*}
since $\alpha < S_0$ by Assumption~\ref{assump: bound on alpha}. Clearly, $a_1 \geq a_0 = 0$. Suppose that $a_t \in \left[ 0,1\right]$ and $a_{t+1} \ge a_t$, for all $1\leq t\leq k$. At $t=k+1$, we have $a_{k+1} = a_k + z_{k+1} \left( 1-a_k\right)$. By the definition of $z_t$, \begin{align*}
    z_{k+1} &= \mathbf{E} \left[\left( \frac{R_{k+1}}{S_{k+1}}\right)^2\right],\\
    &=\mathbf{E} \left[\left( \frac{R_{k+1}}{S_0 + \sum_{i=1}^{k+1}R_i - \left(k+1 \right) \alpha}\right)^2\right] \in \left[0,1\right], \label{eq: z_t}
\end{align*}
since $\alpha < S_0$ and $\alpha < R_{\min}$ by Assumption~\ref{assump: bound on alpha}. Therefore, it follows that $a_{k+1} \geq a_k$ and $a_{k+1} \in \left[ 0,1 \right]$. This completes the induction step and the lemma is proved.
\end{proof}

\begin{lemma}
    Let $a_t$ be defined by Eqs.~(\ref{eq: a_t})–(\ref{eq: z_t}). We have for each $t \geq 1$,
\begin{equation}
    a_t \le \frac{R_{\max}^2}{S_0 \left( R_{\min} - \alpha \right)}. \label{eq: upper bound for at}
\end{equation}
\end{lemma}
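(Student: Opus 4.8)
The plan is to convert the nonlinear recursion for $a_t$ into a summable linear bound, and then control the resulting series by comparison with an integral. First I would use Lemma~\ref{lemma: a_t in 0,1 and increasing}, which gives $a_t \in [0,1]$, so that the factor $1 - a_t$ in the recursion $a_{t+1} = a_t + z_{t+1}(1 - a_t)$ is at most $1$. Since $z_{t+1} \ge 0$, this yields $a_{t+1} \le a_t + z_{t+1}$, and telescoping from $a_0 = 0$ produces the linear bound $a_t \le \sum_{i=1}^t z_i$. This reduces the claim to estimating a partial sum of the quantities $z_i = \mathbf{E}\!\left[(R_i/S_i)^2\right]$, uniformly in $t$.

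Next I would replace each $z_i$ by a deterministic upper bound. Since $\gamma = 0$, the recurrence $S_{t+1} = S_t + R_{t+1} - \alpha$ integrates to $S_i = S_0 + \sum_{j=1}^i R_j - i\alpha$; using $R_j \ge R_{\min}$ together with $\alpha < R_{\min}$ from Assumption~\ref{assump: bound on alpha} gives the deterministic lower bound $S_i \ge S_0 + i(R_{\min} - \alpha)$. Combining this with the pointwise upper bound $R_i \le R_{\max}$ yields $z_i \le R_{\max}^2 \big/ \big(S_0 + i(R_{\min} - \alpha)\big)^2$, and hence $a_t \le R_{\max}^2 \sum_{i=1}^t \big(S_0 + i(R_{\min} - \alpha)\big)^{-2}$.

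Finally I would bound the series uniformly in $t$ by an integral comparison. Writing $c = R_{\min} - \alpha > 0$, the map $x \mapsto (S_0 + cx)^{-2}$ is decreasing, so $\sum_{i=1}^{\infty} (S_0 + ic)^{-2} \le \int_0^{\infty} (S_0 + cx)^{-2}\,dx = 1/(c\,S_0)$. Substituting back gives $a_t \le R_{\max}^2 \big/ \big(S_0 (R_{\min} - \alpha)\big)$, which is exactly the claimed bound.

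The estimate is clean enough that I do not expect a genuine obstacle; the only points requiring care are that the denominator stays positive and that the sum–integral comparison is legitimate, both of which rest on $R_{\min} - \alpha > 0$ (Assumption~\ref{assump: bound on alpha}) and on the monotonicity of the integrand. The one mildly delicate step is the passage from the \emph{random} $S_i$ to its deterministic lower bound, since this is what makes the $z_i$ estimate hold uniformly and lets the telescoped sum be controlled by a $t$-independent quantity.
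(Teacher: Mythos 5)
Your proposal is correct and follows essentially the same route as the paper's proof: bound $1-a_{i-1}\le 1$ to get $a_t\le\sum_{i=1}^t z_i$, use the deterministic lower bound $S_i\ge S_0+i(R_{\min}-\alpha)$ together with $R_i\le R_{\max}$ to bound each $z_i$, and finish with the sum--integral comparison yielding $R_{\max}^2/(S_0(R_{\min}-\alpha))$. No substantive differences.
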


\begin{proof}
    By Eqs.~(\ref{eq: a_t})–(\ref{eq: z_t}), we have
    \begin{equation*}
        a_t - a_0 = \sum_{i=1}^t z_i \left( 1-a_{i-1}\right),
    \end{equation*}
noting that $a_0 \coloneqq 0$. It follows from Lemma~\ref{lemma: a_t in 0,1 and increasing} that $a_t \in \left[0,1\right]$. Consequently, we obtain
\begin{align*}
    a_t &\leq \sum_{i=1}^t z_i=\sum_{i=1}^t \mathbf{E} \left[\left( \frac{R_{i}}{S_{i}}\right)^2\right], \\
    & \leq \sum_{i=1}^t \left(\frac{R_{\max}}{S_0 + i\left(  R_{\min} - \alpha\right)}\right)^2,
\end{align*}
where the last inequality follows from the fact that $S_i = S_0 + \sum_{j=1}^i R_i - i \alpha$.
Since the function $x \to \left(\frac{R_{\max}}{S_0 + x \left(  R_{\min} - \alpha\right)}\right)^2$ is decreasing on $\mathbb{R}_+$ due to $R_{\max} >0, S_0 >0$, and $R_{\min} - \alpha > 0$ (by Assumption~\ref{assump: bound on alpha}), using the sum-integral trick as in ~\citet[Lemma A.4]{rosu2021market}, we obtain:
\begin{equation}
    a_t \leq \int_{0}^\infty \left(\frac{R_{\max}}{S_0 + x\left(  R_{\min} - \alpha\right)}\right)^2 dx = \frac{R_{\max}^2}{S_0 \left( R_{\min} - \alpha \right)}.
    \label{ineq: upper bound on at}
\end{equation}
\end{proof}

\textit{Proof of Theorem~\ref{thm: stability of share}.}
By Lemma~\ref{lemma: unconditional var of share}, Chebyshev's inequality, and the uppder bound in~\ref{eq: upper bound for at}, we get
    \begin{equation*}
        \mathbf{P} \left( \left\lvert \frac{\omega_{j,t}}{\omega_{j,0}} - 1\right\lvert > \epsilon \right) \leq \frac{a_t \left( 1- \omega_{j,0}\right) }{\epsilon^2 \omega_{j,0}} \leq \frac{R_{\max}^2}{\epsilon^2 s_{j,0} \left( R_{\min} - \alpha \right)},
    \end{equation*}
 since $S_0 \omega_{j,0} = s_{j,0}$ and $0 \leq 1-\omega_{j,0} \leq 1$. This proves~the estimate \eqref{ineq: limiting share cvg to initial share}.

\newpage
\section{Additional Results for Numerical Experiments} \label{appendix: additional numerical results}

\begin{table}[h] 
\centering
\begin{tabular}{ |c|c|c|c|c|c| } 
 \hline
 \rule{0pt}{10pt} $\bar{f}_i/\bar{v}_i$ & 2 & 3 & 5 & 8 & 10 \\ 
 \hline 
 $h_1$ & 154.60 & 209.39 &319.68 & 485.63 &596.37 \\ 
 $\mathbf{E} \left(\pi_1\right)$ &94.89 &133.59 &211.51 &328.74 & 406.96 \\
 \hline
 $h_2$& 96.45 & 134.13 & 209.45 & 322.38 & 397.67 \\ 
 $\mathbf{E} \left(\pi_2\right)$ &30.28 & 45.02 &74.48 &118.65 & 148.09\\
 \hline
 $h_3$& 63.84 & 90.77 & 144.32 &224.47 & 277.87\\ 
 $\mathbf{E} \left(\pi_3\right)$ &12.40 &19.03 &32.30 &52.21 &65.49 \\
 \hline 
 $h_4$& 31.16 &45.46 &73.91 &116.47 &144.82\\
 $\mathbf{E} \left(\pi_4\right)$ &2.84 &4.49 &7.82& 12.83& 16.18 \\
 \hline
\end{tabular}

\medskip
\caption{This table presents the equilibrium outcomes for four players under the assumption that each player $i \in \left[ 4 \right]$ has a common ratio $\bar{f}_i/\bar{v}_i$, where the ratio is set to $2,3,5,8$, or $10$. The players' $\bar{v}_i$ follow the ratio $\bar{v}_1: \bar{v}_2:\bar{v}_3:\bar{v}_4 = 5:3:2:1$, with $\bar{v}_4 = 30$.}
\label{table: 4 player with common ratio}
\end{table}

\begin{table}[h]
\centering
\begin{tabular}{ |c|c|c|c|c|c| } 
 \hline
 $\bar{v}_4$ & 10 & 20   & 30  & 50  & 80  \\ 
 \hline 
 $h_1$ & 13099.66  &   26199.33&39298.99  & 65498.31  & 104797.30 \\ 
 $\mathbf{E} \left(\pi_1\right)$ & 23844.40  & 47688.79 &71533.19  & 119221.98 & 190755.16  \\
 \hline
 $h_2$&6164.17   & 12328.35  &18492.52  & 30820.87  &49313.39 \\ 
 $\mathbf{E} \left(\pi_2\right)$ &2687.11  &5374.23 &8061.34 &13435.56 &21496.90  \\
 \hline
 $h_3$&976.29  &1952.57  & 2928.86 &4881.44 &7810.30 \\ 
  $\mathbf{E} \left(\pi_3\right)$ &49.30 & 98.61&147.91 &246.52 &394.43  \\
 \hline
 $h_4$& 49.96 &99.93 &149.89 &249.81 &399.70\\
 $\mathbf{E} \left(\pi_4\right)$ &0.12 &0.25 &0.37 & 0.62& 0.99\\
 \hline
\end{tabular}

\medskip
\caption{This table presents the equilibrium outcomes for four players under the assumption that $\frac{\bar{f}_1}{\bar{v}_1} = 1000 > \frac{\bar{f}_2}{\bar{v}_2} = 500 > \frac{\bar{f}_3}{\bar{v}_3} = 100 > \frac{\bar{f}_4}{\bar{v}_4} = 10$. The players' $\bar{v}_i$ follow the ratio $\bar{v}_1: \bar{v}_2:\bar{v}_3: \bar{v}_4= 5:3:2:1$, where $\bar{v}_4$ is set to $10, 20, 30, 50$, or $80$.}
\label{table: 4 player with descending ratio}
\end{table}

\begin{figure}[h]
    \centering  \includegraphics[width=\linewidth]{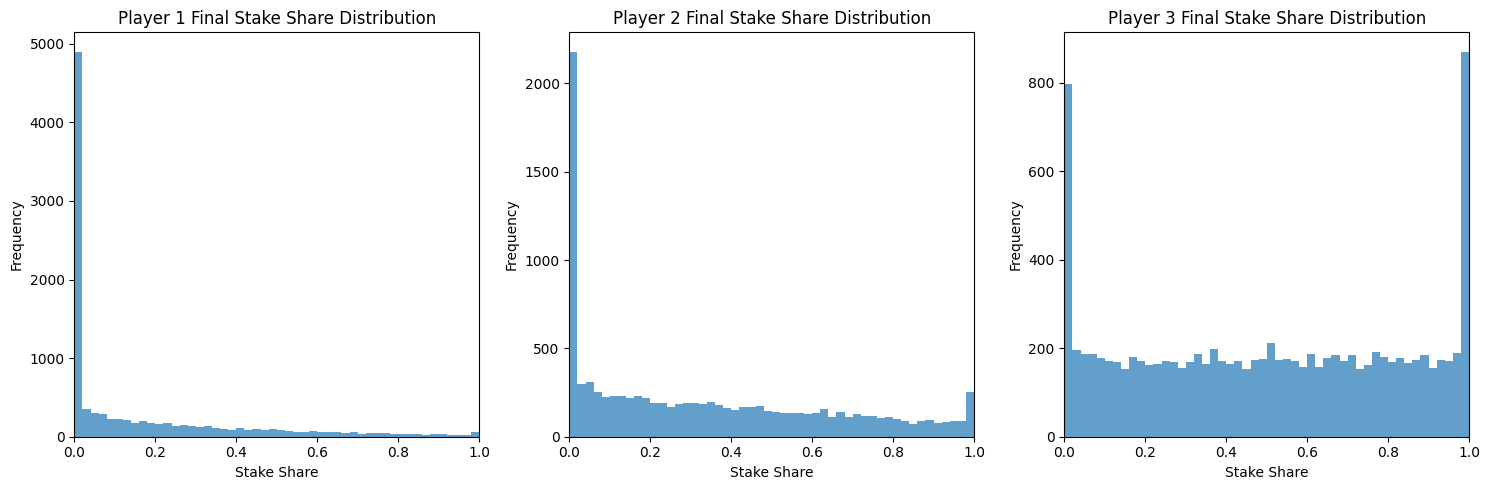}
    \caption{Distribution of final stake shares ($\alpha \neq 0, \gamma = 0$)} 
    \label{fig: final stake share dist a=8 r=0}
\end{figure}

\begin{figure}[h]
    \centering  \includegraphics[scale=0.2]{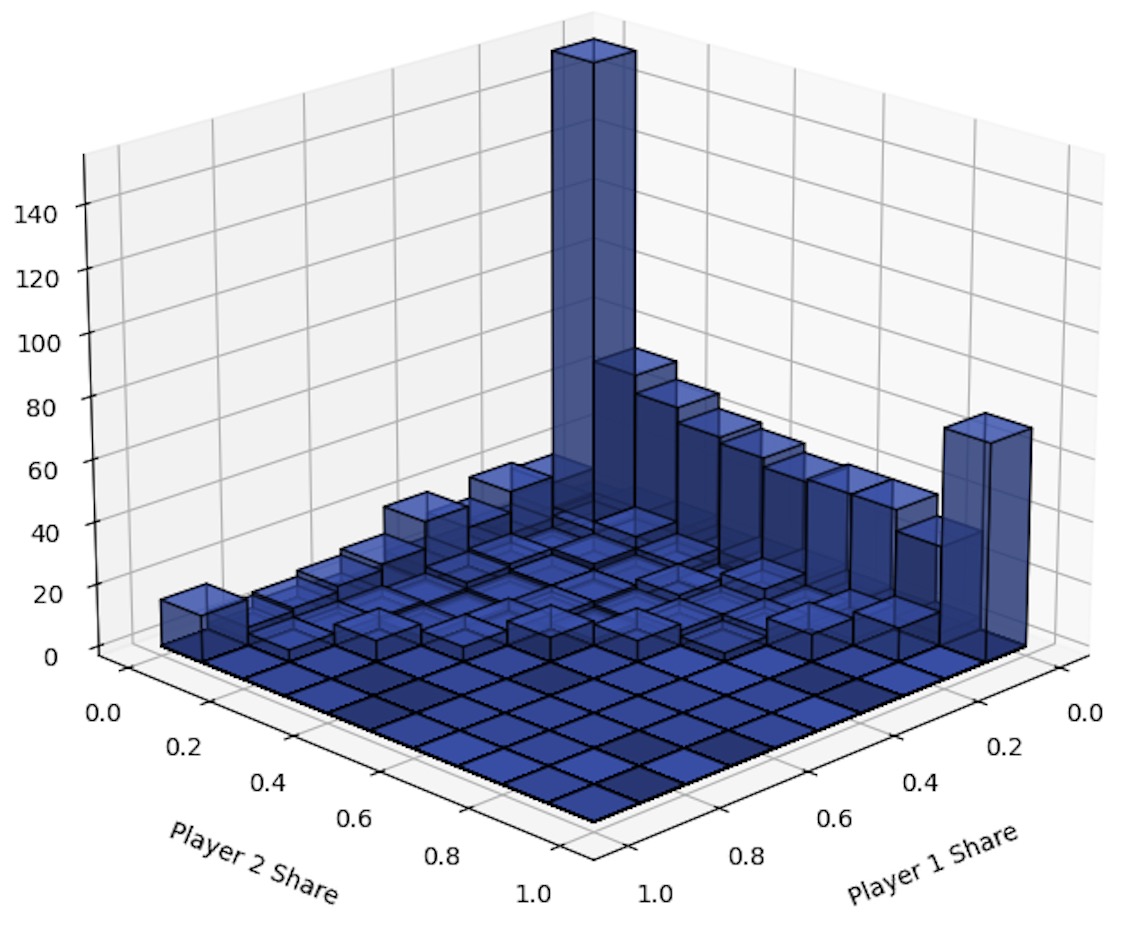}
    \caption{Histogram of the joint distribution of final stake shares ($\alpha \neq 0, \gamma = 0$)} 
    \label{fig: 3d hist of final stake dist a = 8 r = 0}
\end{figure}

\begin{figure}[h]
    \centering  \includegraphics[width=\linewidth]{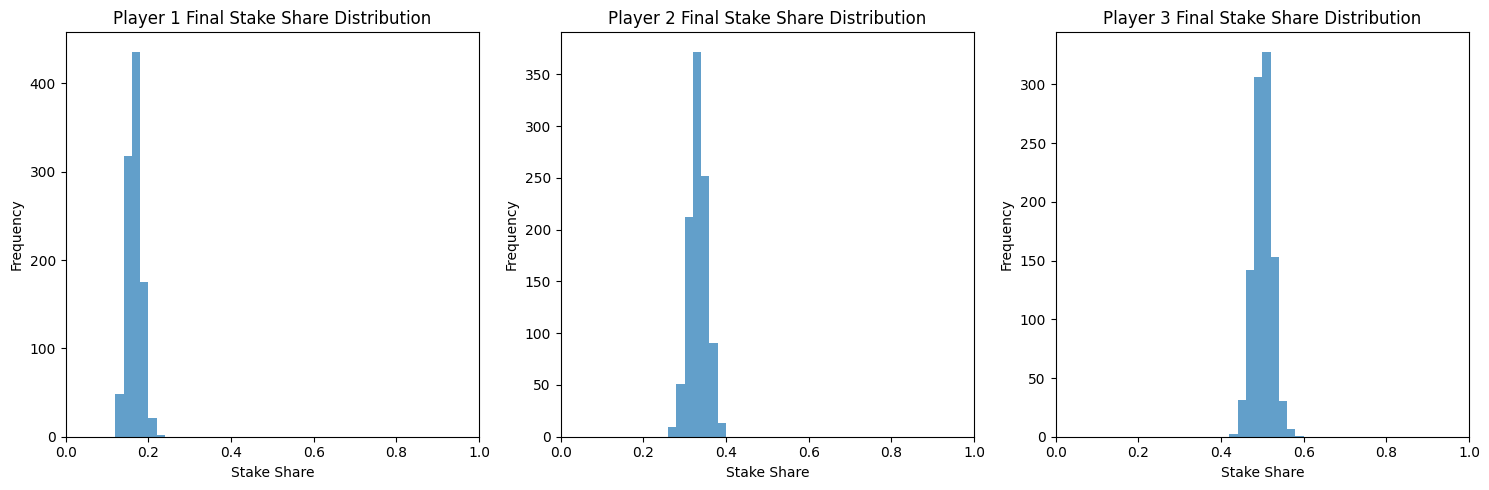}
    \caption{Distribution of final stake shares with large initial stakes ($\alpha \neq 0, \gamma = 1.5$)} 
    \label{fig: final stake share dist a=8 r=1.5 initial stake large}
\end{figure}

\quad Figure~\ref{fig: final stake share dist a=8 r=0} illustrates the limiting stake distribution under conditions of high consumption costs and small initial stakes, specifically with $\gamma = 0, \alpha = 8$, and initial stakes $s_{1,0} = 10, s_{2,0} = 20, s_{3,0} = 30$. The reward $R_t$ follows a similar setting as in previous cases. Under this setting, we observe that some validators' stake shares are likely to be driven toward zero, as explained in Section~\ref{sec: simulation for validators' shares}. Their joint distribution is shown in Figure~\ref{fig: 3d hist of final stake dist a = 8 r = 0}.

\quad Compared to the previous setting, where the initial stakes for the three players are $10, 20$, and $30$, as shown in Figures~\ref{fig: final stake share dist a>0},~\ref{fig: final stake share dist a=0}, and~\ref{fig: final stake share dist a=8 r=0}, we set the initial stakes to $1000$, $2000$, and $3000$, respectively, in Figure~\ref{fig: final stake share dist a=8 r=1.5 initial stake large}. In this setting, we observe that the limiting share, $\omega_{j,T}$, indeed converges to the initial share, consistent with the theoretical results established in Theorem~\ref{thm: stability of share}.

\end{document}